\newcommand{\inv}{^{-1}}
\newcommand {\R}{\mathbb R}
\newcommand{\abk}[1]{\left\langle #1 \right\rangle}
\renewcommand{\vec}[1]{\mathbf{#1}}
\newcommand{\unit}[1]{\ \mathrm{ #1}}
\DeclareMathOperator{\supp}{supp}
\DeclareMathOperator{\FP}{FP}
\DeclareMathOperator{\exc}{exc.}
\DeclareMathOperator{\nullh}{null}
\theoremstyle{definition}
\newtheorem{prop}{Proposition}
\begin{document}

%\preprint{APS/123-QED}

\title{Diverse mean-field dynamics of clustered, inhibition-stabilized Hawkes networks via combinatorial threshold-linear networks}
%Dynamics of clustered spiking networks via the CTLN model}% Force line breaks with \\
%\thanks{A footnote to the article title}%

\author{Caitlin Lienkaemper}
 \altaffiliation{clienk@mit.edu}%Lines break automatically or can be forced with \\
 \affiliation{
Massachusetts Institute of Technology, Department of Brain and Cognitive Science }

\author{Gabriel Koch Ocker}%
 \email{gkocker@bu.edu}
\affiliation{Boston University, Department of Mathematics and Statistics and Center for Systems Neuroscience}

\date{\today}% It is always \today, today,
             %  but any date may be explicitly specified

\begin{abstract}
  Networks of interconnected neurons display diverse patterns of collective activity. Relating this collective activity to the network's connectivity structure is a key goal of computational neuroscience. 
We approach this question for clustered  networks, which can form via biologically realistic learning rules and allow for the re-activation of learned patterns. Previous studies of clustered networks have focused on metastabilty between fixed points, leaving open the question of whether clustered spiking networks can display more rich dynamics--and if so, whether these can be predicted from their connectivity. 
Here, we show that in the limits of large population size and fast inhibition, the combinatorial threshold linear network (CTLN) model is a mean-field theory for inhibition-stabilized nonlinear Hawkes networks with clustered connectivity. 
%a family of clustered spiking networks in the limits of large population size and fast inhibition. 
The CTLN has a large body of ``graph rules'' relating network structure to dynamics. By applying these, we can predict the dynamic attractors of our clustered spiking networks from the structure of between-cluster connectivity. This allows us to construct networks displaying a diverse array of nonlinear cluster dynamics, including metastable periodic orbits and chaotic attractors. Relaxing the assumption that inhibition is fast, we see that the CTLN model is still able to predict the activity of clustered spiking networks with reasonable inhibitory timescales. For slow enough inhibition, we observe bifurcations between CTLN-like dynamics and global excitatory/inhibitory oscillations. 
\end{abstract}

%\keywords{Suggested keywords}%Use showkeys class option if keyword
                              %display desired
\maketitle

%\tableofcontents

\section{\label{sec:intro} Introduction}
The dynamics of neuronal networks implement sensory, motor, and cognitive computations. To understand how those dynamics are governed by the underlying network structure is a central goal of computational and systems neuroscience. We aim to unite two theoretical approaches to this goal: mean-field theories which relate the statistics of activity to the statistics of connectivity in large networks, and graph-theoretic approaches which relate the attractors of a network to its particular structure.
%Many approaches have been used to study the relationship between network connectivity, dynamics and computation in neural populations. 
%The goal of this paper is to unite two broad approaches: mean-field theories which relate the statistics of activity to the statistics of connectivity, and graph-theoretic approaches which relate the attractors of a particular network to its particular structure. 

The first approach relates the connectivity statistics of large, randomly connected recurrent networks to macroscopic patterns or statistics of their dynamics~\cite{ocker_statistics_2017, la_camera_mean_2022, galves_probabilistic_2024}.  
%In particular, these styles of analysis have been extended to networks with multiple populations of neurons, representing (for instance) cell types or brain regions. 
Here, we consider networks with cluster-like structure: groups of densely and/or strongly interconnected neurons. Clustered structure is  found in cortical networks \cite{markram_network_1997, perin_synaptic_2011, yoshimura_excitatory_2005} and can form via biologically realistic learning rules like spike-time dependent plasticity \cite{litwin-kumar_formation_2014, ocker_training_2019, perin_synaptic_2011, zenke_diverse_2015, montangie_autonomous_2020}. 
Clustered networks are a popular model for metastability (long-lived but transient regimes of activity), especially metastability between high and low activity states in the neocortex \cite{brinkman_metastable_2022, litwin-kumar_slow_2012, rossi_dynamical_2024, mazzucato_expectation-induced_2019}. This metastability is hypothesized to be the dynamical mechanism for working memory, perceptual multistability, and decision-making~\cite{amit_model_1997, wang_decision_2008, moreno-bote_noise-induced_2007, albantakis_changes_2011}. In large networks, the macroscopic dynamics are sensitive to statistics of the network structure but not changes in individual synaptic weights.

In small networks, however, minor changes in connectivity can cause major changes in dynamics \cite{marder_central_2001, curto_graph_2023}. 
%Motivated by this, alternate approaches investigate how the particular graph structure of a network shapes its dynamics. 
Recently, combinatorial threshold-linear network models (CTLNs) have proven a fruitful setting for understanding the link between graph structure and dynamics \cite{curto_pattern_2016, curto_fixed_2019}. 
CTLNs are a simple model but can exhibit diverse nonlinear dynamics \cite{morrison_diversity_2024}.
%While nodes in CTLNs have simple dynamics, making the model relatively mathematically tractable, CTLNs are nonetheless capable of exhibiting the full range of complex behavior exhibited by nonlinear systems. 
That simplicity exposes 1) rules for CTLNs that relate their fixed points and dynamic attractors to their graph structure and 2) a theory of compositionality that describes how attractors in larger networks relate to attractors of their subgraphs \cite{curto_fixed_2019, curto_stable_2024, curto_graph_2023, parmelee_sequential_2022, parmelee_core_2022, santander_nerve_2022}.
%Additional results connect dynamic attractors of CTLNs to network structure \cite{}. %By applying these rules, it is possible to build recurrent network models which replicate quadruped gates or molluskan swimming manually, rather than via optimization \cite{londono_alvarez_juliana_attractor-based_2024}.
This approach has focused on particular graph structures with at most tens of nodes. 
Results from this setting can, however, advance our understanding of which features of connectivity play a major role in determining dynamics and should be incorporated into statistical approaches. 

In this paper, we investigate how the dynamics of a point process network model 
relates to the clustered structure of its excitatory connectivity. We show that networks with structured connectivity between clusters can exhibit varied dynamics: in addition to metastability between fixed points, these networks can exhibit periodic orbits and chaotic attractors, as well as metastability between these dynamic attractors. To relate these dynamics to the network's macroscopic structure, we show that the CTLN is dynamically equivalent to a mean-field theory for the cluster-mean firing rates in the limits of large population size and fast inhibition. This allows us to apply graph rules for CTLNs to predict the dynamics of our clustered spiking networks. 

Finally, we examine how the timescale of inhibitory activity interacts with the cluster structure to shape the network dynamics.
While the CTLN model is informally motivated as describing the activity of a population of excitatory neurons with specific connectivity against a pattern of diffuse inhibition, similar to the ``blanket of inhibition" described in \cite{fino_dense_2011}, a formal derivation of the CTLN model from networks with explicit excitation and inhibition has not been published.
We show that while the CTLN model formally describes the limit where inhibition is very fast, it still qualitatively describes the dynamics of excitatory-inhibitory networks with reasonable inhibitory timescales.  For sufficiently slow inhibition, we observe bifurcations between CTLN-like dynamics and global excitatory/inhibitory oscillations similar to the pyramidal/interneuron gamma (PING) oscillation \cite{whittington_inhibition-based_2000}. 

\section{Results}
\subsection{Clustered spiking networks display diverse dynamics.}

\begin{figure*}[ht!]
    \centering
    \includegraphics{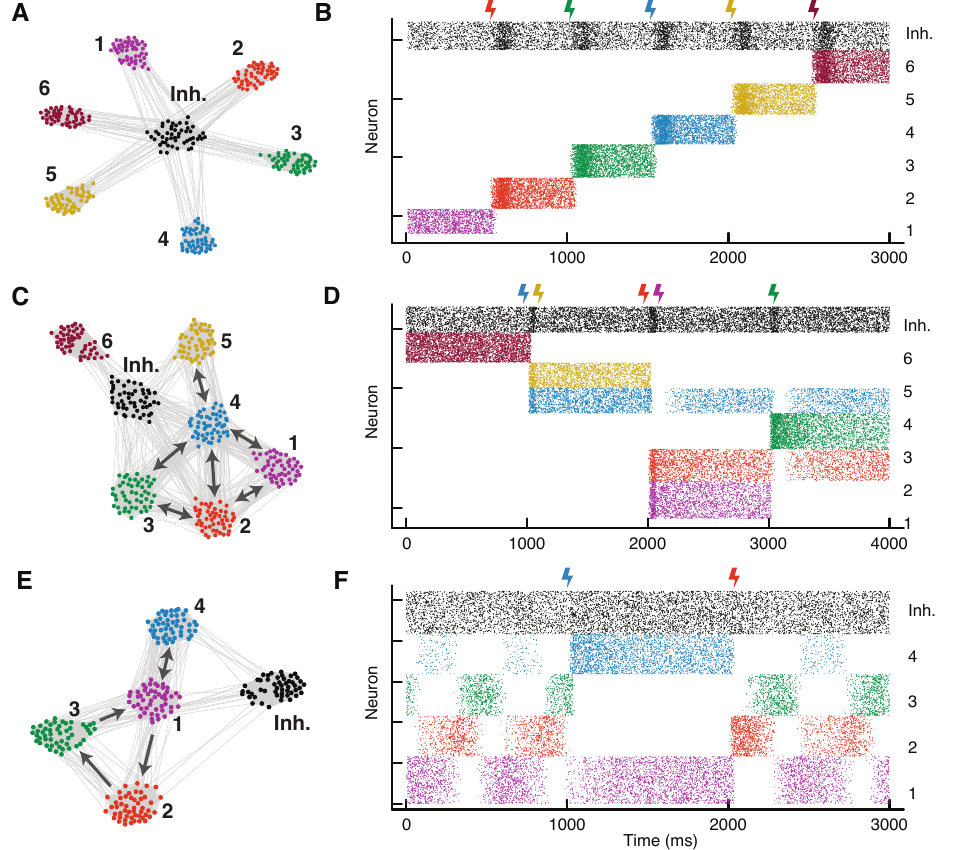}
    \caption{(A) Clustered network with six excitatory clusters (color) and one inhibitory cluster (black).  No connections between clusters. (B) Raster plot of activity of clustered network in panel A, for excitatory cluster size $N_E = NI = 1000$, $\tau_E = 40 \unit{ms}$, $\tau_I = 20\unit{ms}$. Stimulus of duration $10 \unit{ms}$ applied to each neuron in clusters 2 (red),   3 (green),  4 (blue),   5 (yellow), and  6 (maroon) at times $t = 500, 1000, 1500, 2000, 2500$ respectively (lightning bolts). (C) Clustered network with six excitatory clusters (color) and one inhibitory cluster (black).  Pattern of bi-directional connectivity  between excitatory clusters indicated with arrows. (D)  Raster plot of activity of clustered network from panel C for same parameters as B. Stimulus of duration $10 \unit{ms}$ applied to each neuron in clusters 4 and 5 (yellow and blue), 1 and 2 (red and purple), and 3 (green) at times $t = 1000, 2000, 3000$ respectively (lightning bolts).  (E) Clustered network with four excitatory clusters (color) and one inhibitory cluster (black).  Pattern of directed connectivity  between excitatory clusters indicated with arrows. (F) Raster plot of activity of clustered network from panel C for same parameters as B. Stimulus of duration $10 \unit{ms}$ applied to each neuron in clusters 4  (blue),  2 (red), at times $t = 1000, 2000$ respectively (lightning bolts).   }
    \label{fig:diverse_cluster_dynamics}
\end{figure*}

We consider a stochastic spiking network with $n$ excitatory clusters of $N_E$ neurons each and $N_I$ inhibitory neurons (e.g., Fig.~\ref{fig:diverse_cluster_dynamics}A). 
% , a directed pattern of between-cluster connectivity, 
 Within-cluster connectivity is stronger than between-cluster connectivity and the inhibitory neurons connect non-specifically to all excitatory clusters.
 Connectivity is sparse and random and synaptic weights scale inversely with the population size.
This connectivity determines a neuron-by-neuron weight matrix $J_{ij}$. Excitatory and inhibitory neurons have distinct synaptic time constants  $\tau_E $ and $\tau_I$ and receive distinct external inputs $b_E$ and $b_I.$

Each neuron's spike train is a point process $\dot s_i(t)$ whose time dependent rate depends on the voltage $v_i(t)$. The joint dynamics of the voltage and spike train are given by 
\begin{equation}
\label{eqn:v_model_spk}
\begin{aligned}
    \tau_{\alpha(i)} \dot v_{i} &= - v_i + \sum_{j =1}^N J_{ij} \dot s_j + b_{\alpha(i)} ,\\
     \dot s_i &\sim \mathcal{PP} \big(\lfloor v_i\rfloor_+\big).
     \end{aligned}
\end{equation}
Here, $\lfloor x \rfloor_+ := \max(x,0)$. We denote the population containing neuron $i$ as $\alpha(i)$, so that $\tau_{\alpha(i)} = \tau_E$ or $\tau_I$ and $b_{\alpha(i)} = b_E$ or $b_I$. Additional model details can be found in Appendix \ref{sec:model_details}. The spike trains $\dot s_i$ are sums of delta functions, where $\dot s_i(t) = \sum_{t'\in T_i} \delta(t-t')$ and $T_i$ is the set of spike times for neuron $i$.  Equation \ref{eqn:v_model_spk} can alternately be written in integral form as 
\begin{align*}
    v_i(t) = v_i(0) + b_{\alpha(i)} + \frac{1}{\tau_{\alpha(i)}}\int_0^t e^{(t' - t)/\tau_{\alpha(i)}} \dot s_j(t') \, dt' .
\end{align*}
This is a nonlinear Hawkes process with excitatory and inhibitory interactions. 
% The spike train of neuron $i$, $\dot{s}_i(t)$, is a point process with time-dependent rate $r_i(t)$, where the joint dynamics of the firing rate and the spike train are given by 
%   \begin{align}
% \tau_{\alpha(i)}    \dot r_i &=  -r_i + \left \lfloor \sum_{j = 1}^N \vec J_{ij} * \dot{s}_j + b_{\alpha(i)} \right \rfloor _+ \\ \dot{s}_i &\sim \mathcal{PP}\left( r_i\right).
%     \label{eqn:spiking}
% \end{align}
% Here, $\lfloor x \rfloor_+ := \max(x,0)$ and $ (\vec J_{ij} * \dot s_j)(t) =  J_{ij}\int_{-\infty}^\infty g(t-t') \dot s_j(t')dt' $ denotes a convolution with a synaptic filter $g$ with width $\tau_s$. We denote the population containing neuron $i$ as $\alpha(i)$, so that $\tau_{\alpha(i)} = \tau_E$ or $\tau_I$ and $b_{\alpha(i)}$ is an external input. If $\tau_{\alpha(i)}\rightarrow 0$ for all $i$, Eq.~\ref{eqn:spiking} becomes a multivariate nonlinear Hawkes process \cite{bremaud_stability_1996}.  Additional model details can be found in Appendix \ref{sec:model_details}. \gko{Talk about generalization to arbitrary linear voltage dynamics.}

When there is no connectivity between the clusters, or when the connectivity between the clusters is weak and unstructured, the network has $n$ metastable states with a unique active excitatory cluster each (Fig.~\ref{fig:diverse_cluster_dynamics}B). 
%Each cluster participates in a single attractor. 
Pulses of external input can drive transitions between attractors; spontaneous noise-driven transitions are also possible over a longer timescale.

More varied dynamics become possible when there is structured connectivity between clusters. We describe the structure of connectivity between clusters with a directed graph $G$: An edge from cluster $\alpha$ to cluster $\beta$ in $G$ means that the connections from cluster $\alpha$ to cluster $\beta$ are stronger than the baseline level of connectivity between clusters. 
When the pattern of connectivity between clusters is symmetric (Fig.~\ref{fig:diverse_cluster_dynamics}C), the network still displays metastability between fixed points (Fig.~\ref{fig:diverse_cluster_dynamics}D). However, multiple excitatory clusters may be active in one attractor and individual clusters may participate in multiple attractors. When the pattern of connectivity between clusters is not symmetric (Fig.~\ref{fig:diverse_cluster_dynamics}E), the network can have dynamic attractors such as periodic orbits or chaotic attractors, in addition to metastable fixed points. For example, the network illustrated in Fig.~\ref{fig:diverse_cluster_dynamics}E displays metastability between a periodic orbit and a fixed point (Fig.~\ref{fig:diverse_cluster_dynamics}F).

\subsection{Mean-field theory for clustered, inhibition-stabilized networks is dynamically equivalent to the CTLN model}

\begin{figure*}[ht!]
    \centering
    \includegraphics[]{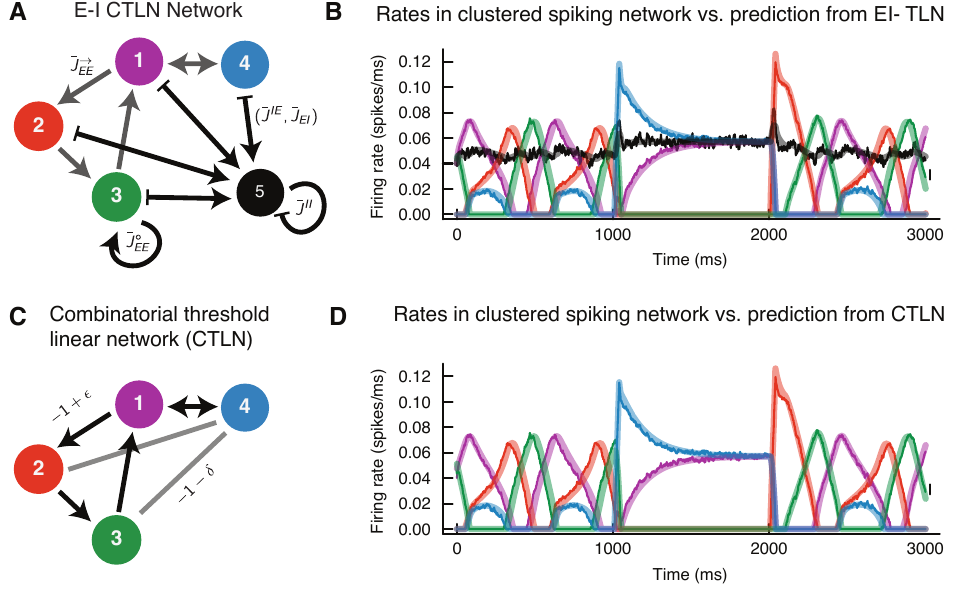}
    \caption{(A) EI-TLN corresponding to the clustered network in Fig.~\ref{fig:diverse_cluster_dynamics} E.  Gray edges represent the directed pattern of connectivity between excitatory clusters, while black edges represent connections to and from the inhibitory cluster.  (B)
     Mean firing rates in clustered spiking network (thick, transparent line) versus prediction from EI-TLN (thin, solid line) with external input as described in  Fig.~\ref{fig:diverse_cluster_dynamics} F.  (C) CTLN corresponding to the clustered network in panel A.  Black edges represent the directed pattern of connectivity between excitatory clusters, which are now effectively weakly inhibitory, with weight $-1+\epsilon$. Gray edges  between non-connected clusters represent strongly inhibitory effective connectivity, with weight $-1 -\delta$.   (D)  Firing rates in clustered spiking network (thick, transparent line) versus prediction from CTLN (thin, solid line) with external input as described in  Fig.~\ref{fig:diverse_cluster_dynamics} F.  }
    \label{fig:mean_field}
\end{figure*}

 % We connect the dynamics of clustered spiking networks to the directed pattern of between-cluster connectivity by relating the clustered spiking model to the Combinatorial Threshold Linear Network (CTLN) model, a simple firing rate model introduced in  \cite{curto_pattern_2016}. \gko{Why?} As an intermediate step, we introduce a modification of the CTLN model, which we term the Excitatory-Inhibitory CTLN (EI-TLN) model, which incorporates explicit excitation and inhibition into the CTLN model. We will first show that the EI-TLN is a mean-field theory for our clustered spiking network, and will then show that the EI-TLN model approaches the CTLN model in the limit of fast inhibition. \gko{The EI-TLN model is an older and better-known family of models (maybe not by that name, but that form of rate network goes back a long time. It might make more sense to lead with that, and then introduce the CTLN as an especially tractable case.}

In the limit of large network size ($N_E, N_I \to \infty$), the cluster-mean firing rates obey $n+1$ dimensional deterministic dynamics. We will show that in the additional limit of $\tau_I/\tau_E \to 0$, this rate model is dynamically equivalent to the CTLN model. 
Let  $\bar J_{\alpha\beta}$ denote the mean weight of synapses from neurons in cluster $\beta$ to a neuron in cluster $\alpha$. The connection weights in our microscopic model depend only on the excitatory/inhibitory identity of the clusters and the directed pattern of connectivity between clusters (Fig.~\ref{fig:mean_field}A). So, the expected value of the mean weight $\bar J_{\alpha\beta}$ takes one of six possible values:
\begin{align}
\bar J_{\alpha \beta} = \begin{cases} 
\bar J^\circlearrowleft_{EE} &\mbox { if } \beta = \alpha, \alpha, \beta  \mbox{ excitatory}\\
\bar J_{EE}^{\to}  &\mbox { if } \beta \to \alpha \in G,  \alpha, \beta  \mbox{ excitatory}\\
\bar J_{EE}^{\not\to}  &\mbox { if } \beta \not\to \alpha \in G,  \alpha, \beta  \mbox{ excitatory}\\
\bar J_{EI} &\mbox { if } \alpha  \mbox{ excitatory}, \beta \mbox{ inhibitory }\\
\bar J_{IE} &\mbox { if } \alpha \mbox{ inhibitory } , \beta \mbox{ excitatory}\\
\bar J_{II} &\mbox { if } \alpha, \beta \mbox{ inhibitory }.
\end{cases}
\label{eqn:weights}
\end{align} 
Now, we take the expectation over connectivity, spike train realization, and the identity of a neuron within its cluster in  \ref{eqn:v_model_spk}  over. Letting  $v_{\alpha} = \abk{v_i}_{i\in \alpha}$, we find that in the large-$N$ limit the expected cluster-mean voltages evolve as 
\begin{align}
    \tau_\alpha \dot v_{\alpha}= -v_{\alpha}  + \sum_{\beta=1}^n \bar J_{\alpha\beta} \lfloor v_\beta \rfloor_+ + b_\alpha. \label{eqn:v_model_rate}
\end{align}
Corresponding mean-field limits for similar nonlinear Hawkes models are proven in \cite{stiefel_mean-field_2023, pfaffelhuber_mean-field_2022, zhu_large_2015, chevallier_mean-field_2017, delattre_hawkes_2016, heesen_fluctuation_2021}, in particular for the multi-population case in \cite{ditlevsen_multi-class_2017}.

%  In Appendix \ref{sec:limit}, we show that in the limit where $N = \min\{N_E, N_I\} \to \infty $  and $\tau_s \sim  \frac{1}{\sqrt N}$
% {\small
%  \begin{align}
%         \tau_{\alpha} \abk{\dot r_i}_{i\in \alpha}= - \abk{ r_i}_{i\in \alpha} + \left \lfloor \sum_{i=1}^n \bar J_{\alpha \beta}  \abk{r_j}_{j \in \beta}+ b_\alpha \right \rfloor _+ + O\left(N^{-1/4}\right), \label{eqn:E_I_ctln} 
%  \end{align}}
% where the expectations $\abk{\dot r_i}_{i\in \alpha}$ and $\abk{r_i}_{i\in \alpha}$ are taken over the random connectivity as well as the stochastic spike emission, and $N = \min(N_E, N_I).$ Notice that since the magnitude of the error scales as $O\left(\frac 1 {\sqrt{N \tau_s}}\right)$, the limit of large population size must be taken before the limit of fast synaptic integration; the population size must be large relative to the synaptic integration speed.

Our goal is now to relate the dynamics of the mean firing rates to the graph structure $G$. To do this, we will show that, under certain conditions, the excitatory firing rates in the EI-TLN model are approximated by those of the combinatorial threshold linear network (CTLN) model introduced in \cite{curto_fixed_2019}. 
The CTLN model is highly mathematically tractable, exposing elegant links between network structure and dynamics. By grounding CTLNs in a model with explicit excitatory and inhibitory populations and spikes, we can apply results about CTLNs to understand a broader class of models. 
% This allows us to apply results about CTLNs to predict the dynamics of clustered spiking networks, and to ground the CTLN model in a model with explicit excitatory and inhibitory populations and spikes. 
% The CTLN model is highly mathematically tractable, exposing rigorous and elegant links between network structure and dynamics~\gko{cite}. 
Although units in the CTLN model are traditionally interpreted as single neurons, we interpret them as populations.  Under this interpretation, the CTLN model represents the connectivity between clusters of excitatory neurons, over a background of fast shared inhibition, via a pattern of weak and strong inhibition. 

The CTLN model is defined by a rule for translating a graph into a weighted adjacency matrix. Specifically, if the pattern of connectivity between clusters is given by a directed graph $G$, then the weights in the CTLN model are given by
 \begin{align}
    W_{\alpha\beta} =  \begin{cases}
         -1 + \epsilon &\mbox { if }  \beta \to \alpha \in G \\
           -1 - \delta &\mbox { if }  \beta \to \alpha \notin G,\\
         \end{cases} \label{eqn:ctln_weights}
 \end{align}
 where $0 < \epsilon < 1$ and $ \delta > 0$ (Fig.~\ref{fig:mean_field} C). Additionally, we require $(-1 +\epsilon)(-1-\delta)>1$; see \cite{curto_fixed_2019} for details.  The nodes' dynamics are 
  \begin{align} \label{eqn:ctln}
     \dot x_\alpha = -x_\alpha + \left \lfloor \sum_{\beta=1}^n W_{\alpha \beta} x_\beta + b\right \rfloor _+,
 \end{align}
 where $x_\alpha$ is the mean firing rate in population $\alpha$ and $b > 0$ is an constant, positive external input which is the same for all populations. 

The CTLN model, Eqs. \ref{eqn:ctln_weights},\ref{eqn:ctln}, differs from the mean-field voltage dynamics, Eq. \ref{eqn:v_model_rate}, in two ways. First, while the nonlinearity is inside of the summation in the mean-field voltage dynamics, it is outside of the summation in the CTLN dynamics. Second, while the mean-field voltage dynamics has both positive and negative interaction weights, all weights in the CTLN are negative. We will thus relate the mean-field theory for the clustered spiking network to the CTLN in two steps.

First, we address the location of the nonlinearity using the results of Miller and Fumarola in \cite{miller_mathematical_2012}. Miller and Fumarola show that for any system that evolves according to $\tau \frac{dv}{dt} = -v + J f(v) + b(t)$, $v\in \R^d$, there is an equivalent system which evolves according to $\tau \frac{dx}{dt} = -x + f(J x + c(t))$, where the external inputs $c$ to the $x$-model is a low-pass-filtered version of the input to the $v$-model. When $J$ is invertible, the mapping between systems is a topological conjugacy, and even outside this case, the mapping between the $x$-system and the $v$- system is given by a linear map. This means that the attractor landscapes of the two models are qualitatively the same.

Thus, to predict the mean-field dynamics of our clustered spiking networks, we can alternately study the model
   \begin{align} \label{eqn:ei_tln}
   \tau_\alpha  \dot x_\alpha = -x_\alpha + \left \lfloor \sum_{\beta=1}^n \bar J_{\alpha \beta} x_\beta + c_{\alpha}\right \rfloor _+,
 \end{align}
where $x_{\alpha}$ is the mean firing rate in population $\alpha$. We term systems of this form the excitatory-inhibitory threshold linear network (EI-TLN) model. Although this model has the same functional form as the CTLN model, it differs in that the weight matrix $J$ still retains the explicit excitatory and inhibitory weights from the clustered spiking network. In all following equations, we consider the constant-input case, and thus write $ b_{\alpha}$ instead of $c_{\alpha}$ for the input. 
Figure~\ref{fig:mean_field}B shows an example of the population-averaged firing rates $\abk{f(v_i)}_{i\in \alpha}$ in the clustered spiking network against predicted firing mean rates $f(v_\alpha) = f(J x_\alpha + c_{\alpha})$ generated by simulating Eq. \ref{eqn:ei_tln}, with matching initial conditions and parameters. Details can be found in Appendix \ref{sec:trans}. 

%When inhibition is fast relative to excitation, i.e., in the limit $\tau_I/\tau_E \to 0$, we can approximate the EI-TLN with a $n$-dimensional firing rate model that does not explicitly consider the inhibitory population. 

To relate the EI-TLN model (Eqs.~\ref{eqn:weights},~\ref{eqn:ei_tln}) to the CTLN model (Eqs.~\ref{eqn:ctln_weights},~\ref{eqn:ctln}) we consider the limit of fast inhibition: $\tau_I /\tau_E \to 0$. 
This separation of timescales implies that the mean firing rate of the inhibitory cluster is directly proportional to the mean firing rate across all excitatory clusters:
\begin{align*}
    x_{I} \approx \frac{ 1}{1 - \bar J_{II}} \left(\bar J_{IE} \sum_{\beta \neq I} x_\beta + b_I\right).
\end{align*}
Inserting this into Eq.~\ref{eqn:ei_tln} results in effective inhibitory connections between excitatory clusters and allows us to derive conditions on the weights of the EI-TLN model, $\bar J$, that describe when it limits to the CTLN model.  The first condition is that within-cluster excitation exactly cancels inhibition:
\begin{align} \label{eqn:ei_balance}
    \bar J_{EE}^\circlearrowleft + \frac{\bar J_{EI} \bar J_{IE}}{1 - \bar J_{II}} = 0.
\end{align}
When this condition is met, we can relate the CTLN parameters $\epsilon$ and $
\delta$ to the EI-TLN parameters as 
\begin{align*}
   \epsilon =& \bar J_{EE}^\to  -  \bar J_{EE}^\circlearrowleft +1\\ 
   \delta =& -  \left(\bar J_{EE}^{\not\to} -  \bar J_{EE}^\circlearrowleft + 1\right) 
   \\
   \mbox{ and } b =& \frac{\bar J_{EI}}{1 - \bar J_{II}} b_I +  b_E.
\end{align*}
This allows us to interpret the CTLN constraints $0 < \epsilon < 1$ and $0 < \delta$ as  constraints on the relative strength of within cluster and between-cluster excitation:

\begin{equation}
\label{eqn:circ_vs_notto}
    \begin{aligned}
   0 >\bar J_{EE}^\to -  \bar J_{EE}^\circlearrowleft > -1\\
\bar J_{EE}^{\not\to}  - \bar J_{EE}^\circlearrowleft <  -1 \\
   % \left(   \bar J_{EE}^\circlearrowleft -  \bar J_{EE}^\to\right) \left(\bar J_{EE}^\circlearrowleft -\bar J_{EE}^{\not\to}  \right) > 1.
   \left( \bar J_{EE}^\to -  \bar J_{EE}^\circlearrowleft \right) \big(\bar J_{EE}^{\not\to} - \bar J_{EE}^\circlearrowleft \big) > 1.
\end{aligned}
\end{equation}

When these three additional conditions are met, the EI-TLN reduces to the CTLN.
These imply $\bar{J}_{EE}^\circlearrowleft > 1$, so the network is always in an inhibition-stabilized regime; we will explore this in Section \ref{sec:paradox}. 

Figure ~\ref{fig:mean_field}D shows an example of the CTLN firing rates against the cluster-averaged rates of excitatory neurons, for relatively fast inhibition ($\tau_I/\tau_E = 0.5$). We see that the CTLN rates match the average rates as well as the EI-TLN rates, even with this finite time scale. In Section \ref{sec:fixed_point_bif}, we discuss how the dynamics in the EI-TLN model vary as we change the ratio between the inhibitory and excitatory time scales. Details of this derivation are given in Appendix \ref{sec:ei_tln_to_ctln}. 

Relating clustered spiking networks to the CTLN model in this way allows us to apply results about CTLNs to study the dynamics of clustered spiking networks. In the next two sections, we will survey how results about CTLNs let us construct clustered spiking networks with diverse dynamics that depend on the directed graph structure of between-cluster connectivity. 

\subsection{Graph rules for CTLNs predict dynamics of clustered spiking networks.}
\label{sec:graph_rules}

\subsubsection{Metastable fixed points in clustered spiking networks correspond to maximal cliques in the cluster graph.}

\label{sec:fixed_point}
Figures~\ref{fig:diverse_cluster_dynamics}A and C show two examples of clustered spiking networks that exhibit multiple metastable states. The first network, without strong connectivity between clusters, has a metastable state corresponding to each excitatory cluster, where neurons from that cluster are active while neurons in other excitatory  clusters are silent. The second network, where there is strong, reciprocal connectivity between some pairs of clusters and not others, has multiple metastable states where a subset of clusters are active. 
We apply results about stable fixed points of CTLNs to relate these metastable states to the graph $G$ describing the connectivity between clusters.

Because CTLNs are threshold-linear, they have at most one fixed point in each linear chamber of their phase space.  (Except potentially in degenerate cases, i.e., when the linear system in a chamber has a zero eigenvalue.). Within a chamber, it is easy to recover the fixed point location via matrix inversion. Thus, the fixed point structure of a CTLN is fully described by the set of fixed point supports, which are defined as 
 $\supp (\vec x^*)= \{i \mid x^*_i > 0\}.$  We denote this set of fixed point supports as 
\begin{align*}
    \FP (G, \epsilon, \delta) = \{\supp(\vec x^*) \mid \vec x^* \mbox{a fixed point}  \}
\end{align*}
 Due to the constraint $b > 0,$ the origin is never a fixed point, so the empty set is never a fixed point support.  
Because the CTLN model is a mean-field theory for our clustered spiking model, fixed points in clustered spiking networks are characterized by the clusters they are supported on. Because the evolution of the firing rates in our clustered spiking network is stochastic, stable fixed points in the deterministic CTLN model correspond to metastable fixed points in the clustered spiking network for finite cluster size.

When the graph $G$ is symmetric ( $i\to j$ if and only if $j\to i$) convergence to a fixed point is guaranteed \cite{hahnloser_digital_2000, curto_pattern_2016}. Thus, characterizing the stable fixed points provides a complete picture of the attractor landscape of symmetric networks. Stable fixed points of symmetric CTLNs correspond to \emph{maximal cliques}\footnote{A clique is a set of vertices $\sigma\subseteq[n]$ such that $i \leftrightarrow j$ for all $i, j\in  \sigma$. A clique is \emph{maximal} if it is not contained in any larger clique. ($[n]$ is the set of integers from $1$ to $n$.)} of $G$ \cite{curto_fixed_2019, curto_pattern_2016}.  In Fig.~\ref{fig:fixed_points}A,  
we highlight the maximal cliques in the example from Fig.~\ref{fig:diverse_cluster_dynamics}C. Note that ``trivial" examples of maximal cliques, such as isolated vertices or bidirectional edges that are not contained within larger cliques, still support stable fixed points. The number of maximal cliques that a graph can have grows exponentially in the number of vertices, so the number of potential stable fixed points in a CTLN also grows exponentially with $n$ \cite{curto_stable_2024, moon_cliques_1965}. 

This characterization of the stable fixed point is partially extended to non-symmetric graphs in \cite{curto_fixed_2019} and \cite{morrison_diversity_2024}. A node $k\in [n]$ is a target for a clique $\sigma \subseteq [n]$ if $ j\to k$ for each $j\in \sigma.$ By Theorem 3.5 of \cite{morrison_diversity_2024}, a clique supports a stable fixed point if and only if it is \emph{target-free}.  This is conjectured to completely characterize the stable fixed points of CTLNs; there are no known examples of stable fixed points of CTLNs whose support is not a  target-free clique. This conjecture can be guaranteed to hold under some additional assumptions; for details see  \cite{curto_fixed_2019}. The metastable fixed point of the clustered spiking network in Fig.~\ref{fig:diverse_cluster_dynamics}E is supported on the target-free clique $\{1,4\}$. 
\begin{figure}
    \centering
    \includegraphics{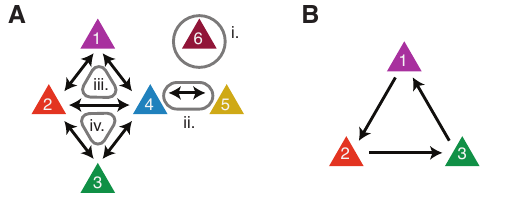}
    \caption{A. Maximal cliques in the connectivity graph of the clustered spiking network pictured in (Fig.~\ref{fig:diverse_cluster_dynamics} C.) Notice that each maximal clique corresponds to a fixed point in (Fig.~\ref{fig:diverse_cluster_dynamics} D).  B. Directed 3-cycle.  }
    \label{fig:fixed_points}
\end{figure}

\subsubsection{When between-cluster connectivity is not symmetric, clustered spiking networks have dynamic attractors with bounded total activity.}

While symmetric connectivity guarantees convergence to a stable fixed point in CTLNs, an alternate condition guarantees the absence of stable fixed points. If a graph is 1) oriented (has no bidirectional edges) and 2) has no sinks (every vertex has at least one outgoing edge), then the CTLN on that graph has no stable fixed points. 
This does not, however, mean that the activity explodes. 

The total activity of CTLNs is bounded above and below:
% The structure of these dynamic attractors is constrained by a bound on the total activity:
for any CTLN defined by a graph $G$ and parameters $0 < \epsilon < 1, \; \delta >0$, and any initial condition $\vec x (0)$, there exists a time $T$ such that for all $t > T$
\begin{align}
    \frac{b}{1 +\delta} < \sum_{\alpha = 1}^n x_\alpha(t) < \frac{b}{1 -\epsilon}.
    \label{eqn:ctln_total_bound}
\end{align}
In other words, all trajectories in CTLNs eventually approach a region where the total population activity is bounded \cite{lienkaemper_combinatorial_2022}. 

Because CTLNs have bounded total activity, networks without stable fixed points have dynamic attractors \cite{morrison_diversity_2024}. 
The simplest example of an oriented graph with no sinks is a directed 3-cycle (Fig.~\ref{fig:fixed_points}B). This CTLN has a limit cycle attractor, established in \cite{bel_periodic_2021}. Although the weights in the CTLN model are all negative, activity still often follows edges in the graph. Graph rules can also be used to predict the dynamic attractors of CTLNs from the graph structure \cite{parmelee_core_2022, parmelee_sequential_2022}. 

Notice that if $\epsilon$ and $\delta$ are held constant in Eq.~\ref{eqn:ctln_total_bound}, the total activity bound does not scale with the number of clusters. Thus, in networks where many clusters participate in an attractor, the mean firing rates in each cluster are proportionately lower, scaling as $1/n$. This can be avoided by scaling $\epsilon$ and $\delta$ with the number of clusters. 
In the EI-TLN, in the limit where $\tau_I\to 0$, this result translates to 
\begin{align}
    \frac{\frac{\bar J_{EI}}{1 - \bar J_{II}} b_I +  b_E}{  \bar J_{EE}^\circlearrowleft - \bar J_{EE}^{\not\to}} < \sum_{\alpha = 1}^n x_\alpha(t) <    \frac{\frac{\bar J_{EI}}{1 - \bar J_{II}} b_I +  b_E}{  \bar J_{EE}^\circlearrowleft  - \bar J_{EE}^\to } . 
    \label{eqn:act_bount_ei}
\end{align}
To keep the mean firing rates constant in $n$ thus requires either making between-cluster excitation stronger or within-cluster excitation weaker. In Appendix \ref{sec:total_bound}, we prove that this bound  also holds for finite, sufficiently small, inhibitory timescales. 
%Notice that this also implies an analogous bound on the clustered spiking network. 

\subsection{Clustered spiking networks exhibit paradoxical effects }
\label{sec:paradox}
\begin{figure}[ht!]
    \centering
    \includegraphics{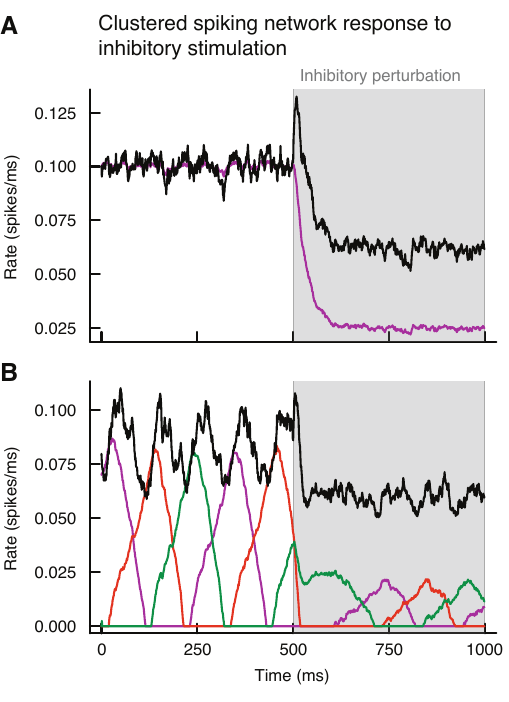}
    \caption{Paradoxical effect in clustered spiking networks. Black: average inhibitory firing rate. Color: average rates within each excitatory cluster. (A) Network with a single excitatory cluster.  From $T = 500 \unit{ms}$ onward, excitatory stimulation of $.1 \unit {spikes/ms}$ is applied to the inhibitory population only. Notice that the firing rates of the inhibitory neurons, as well as the excitatory neurons, decrease. (B) A network with three excitatory clusters, connected in a directed cycle.  The clustered spiking network has a periodic orbit. From $T = 500 \unit{ms}$ onward, excitatory stimulation of  $.1 \unit {spikes/ms}$  is applied to the inhibitory population only. Again, the firing rates of the inhibitory neurons, as well as the excitatory neurons, decrease.  }
    \label{fig:para}
\end{figure}

We can interpret this bound as a result about the stability of the total mean excitatory rate, even in the absence of stable fixed points. This stability is provided by the inhibitory population. Notice that since $J_{EE}^{\not\to} \geq 0$, Eq. \ref{eqn:circ_vs_notto} implies $J_{EE}^\circlearrowleft \geq 1.$ This means that the excitatory-only subnetwork is not stable, so the EI-TLN is in an inhibition-stabilized regime. As an inhibition-stabilized network (ISN), the EI-TLN displays the so-called paradoxical effect: stimulation of the inhibitory population suppresses the excitatory clusters, leading in turn to a decrease in the inhibitory firing rates \cite{sadeh_inhibitory_2021, tsodyks_paradoxical_1997}. This effect is considered a signature of inhibitory stabilization, and is observed in many cortical areas across species \cite{sadeh_inhibitory_2021}. 

 While the paradoxical effect in ISNs is typically described as describing the effect of perturbations when the network is at a fixed point (Fig.~\ref{fig:para}A), we observe the paradoxical effect of inhibitory stimulation on dynamic attractors of the clustered spiking network as well (Fig.~\ref{fig:para}B).  For sufficiently fast inhibition, we prove in Appendix \ref{sec:app_paradox} that stimulating the inhibitory population reduces inhibitory firing rates along all dynamic attractors in EI-TLNs.  More precisely, changing the external inputs $b_E, b_I$ scales the rates along attractors of the EI-TLN.  For trajectories obeying Eq.~\ref{eqn:act_bount_ei}, increasing $b_I$ scales down both the excitatory and inhibitory rates. Because this is a linear transformation, and because solutions of the $v$-model and the $x$- model must obey the relationship $f(v) = x +\tau \frac{dx}{dt}$, this phenomenon is also seen in solutions of the $v$-model (Eq. \ref{eqn:v_model_rate}) and thus the clustered spiking network (Eq. \ref{eqn:v_model_spk}).

\subsection{When inhibition is slow, a synchronized oscillation emerges}

\begin{figure*}[ht!]
\centering
    \includegraphics{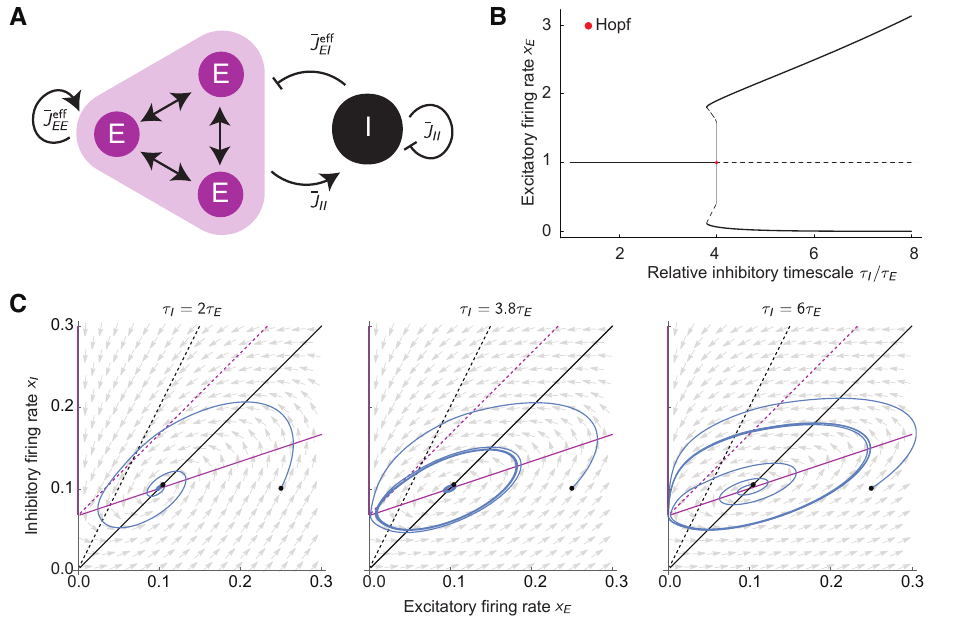}
    \caption{(A) EI-TLN network with one excitatory maximal clique.  (B) Bifurcation diagram. Solid lines indicate stability, dashed indicates instability. Hopf bifurcation point marked in red.  Minimum and maximum values of $x_E$ along limit cycle are shown.   (C) Phase plots for $\tau_I/\tau_E = 2$ (left), 3.8 (center), 6 (right). Gray arrows indicate vector field. Nullclines (solid) and linear chamber boundaries (dashed) for $x_E$ and $x_I$ are shown in purple and black respectively. Trajectories for two initial conditions ($(x_E, x_I )= (.1, .25)$ and $(x_E, x_I) = (.105, .105)$, black circles) are shown in blue.  Left panel is from region with stable fixed point only, center is from bistable region, right is  from region with stable limit cycle and unstable fixed point. Parameters: $\bar J_{EE}^\circlearrowleft = 1.5, \, \bar J_{EE}^\to = .75, \, \bar J_{EI} = -1.5, \, \bar J_{IE} = 2, \, \bar J_{II} = -1$.}
    \label{fig:fp_bif}
\end{figure*}

We have shown that the CTLN model is a mean-field theory for clustered spiking networks for large cluster size and fast inhibition. However, excitatory and inhibitory membrane time constants are typically both approximately 10-30 ms. We thus relax the assumption of fast inhibition, comparing the dynamics of the EI-TLN to the CTLN model as we vary the ratio between the inhibitory and excitatory timescales. 

\subsubsection{Oscillation emerges from stable fixed points via non-smooth Hopf bifurcation}
As we observed in Section \ref{sec:fixed_point}, there is a one-to-one correspondence between fixed points in the CTLN and the EI-TLN. We first ask how fast inhibition needs to be for a stable fixed point of the CTLN to remain stable in the EI-TLN. All known fixed points in CTLNs are supported on target-free cliques, so we restrict our attention to stable fixed points of this type. 

We fix the parameters $\bar J_{EE}, \bar J_{EI}, \bar J_{IE}, \bar J_{II},$ and $\tau_E$. Additionally, we consider the size $k$ of the clique as a parameter. We consider the linearized dynamics around this fixed point, which is the linear system in the chamber containing the fixed point. At the fixed point, all other excitatory clusters in the network are below threshold; we thus neglect them in this analysis. 

All clusters within the clique are, by definition, connected identically to one another. So, we can describe the dynamics via a two-dimensional linear dynamical system describing the mean excitatory and inhibitory firing rates, $x_E$ and $x_I$  (Fig.~\ref{fig:fp_bif}A): 
\begin{align*}
  \tau_E \, \dot x_E &= -x_E + \bar J_{EE}^\mathrm{eff} \, x_E  + \bar J_{EI} \, x_I + b_E, \\
    \tau_I \, \dot x_I &= -x_I + k \, \bar J_{IE} \, \bar J_{II} \, x_I \, x_E +b_I,
\end{align*}
where $\bar J_{EE}^\mathrm{eff} = (k-1) \bar J_{EE}^\to +\bar J_{EE}^\circlearrowleft $ is the total excitatory-to-excitatory connectivity. 
This linear system has a single fixed point, which is stable when 
\begin{align*}
    \frac{\tau_I}{\tau_E } < -\frac{\bar J_{II} - 1}{\bar J_{EE}^{\mathrm{eff}} -1}.
\end{align*}
Notice that the strength of inhibition relative to excitation needed to stabilize the network for a fixed ratio of timescales increases with the number of clusters in the clique.
The EI-TLN dynamics are non-smooth, so the bifurcation corresponding to this loss of stability is not locally described by a standard topological normal form. 
When the fixed point loses stability, the two eigenvalues of the linearization about the fixed point always have nonzero imaginary parts, thus solutions spiral away from the fixed point when it loses stability. 
% When the network has one excitatory maximal clique, we observe that this bifurcation gives rise to a large-amplitude periodic orbit along which all excitatory neurons are synchronized, reminiscent of a pyramidal-interneuron gamma  oscillation \cite{whittington_inhibition-based_2000}. Thus, we call the bifurcation a non-smooth Hopf bifurcation. A version of this calculation also appears in \cite{tsodyks_paradoxical_1997}, where a limit cycle emerges when inhibition is made weaker in an ISN. Notice that the strength of inhibition relative to excitation needed to stabilize the network for a fixed ratio of timescales increases with the number of clusters. 

We investigate this bifurcation through numerical continuation in a network with one excitatory maximal clique (Fig. \ref{fig:fp_bif}A). In this case, the critical value is $\tau_I/\tau_E = 4$ (Fig. \ref{fig:fp_bif}B). The bifurcation gives rise to a large-amplitude periodic orbit along which all excitatory neurons are synchronized, reminiscent of a pyramidal-interneuron gamma  oscillation \cite{whittington_inhibition-based_2000}. There is a region of bistability between the fixed point and a large-amplitude limit cycle for $3.8 \leq \tau_I/\tau_E < 4.$ Within the bistable region, the stable limit cycle and the fixed point are separated by an unstable periodic orbit. So, we call the bifurcation a subcritical non-smooth Hopf bifurcation. The unstable limit cycle is born with finite amplitude because the fixed point is in the interior of a linear chamber, which cannot support a periodic orbit except exactly at the bifurcation point (Fig.~\ref{fig:fp_bif}C). Details are in Section \ref{sec:fixed_point_bif}. A version of this calculation also appears in \cite{tsodyks_paradoxical_1997}, where a limit cycle emerges when inhibition is made weaker.

% A version of this calculation also appears in \cite{tsodyks_paradoxical_1997}, where a limit cycle emerges when inhibition is made weaker in an ISN. Notice that the strength of inhibition relative to excitation needed to stabilize the network for a fixed ratio of timescales increases with the number of clusters.

% and the bifurcation is subcritical (Fig.~\ref{fig:fp_bif}B). There is a region of bistability between the fixed point and large-amplitude limit cycle for $3.8 \leq \tau_I/\tau_E < 4.$ Within the bistable region, the stable limit cycle and the fixed point are separated by an unstable periodic orbit. Because the fixed point is in the interior of a linear chamber, which cannot support a periodic orbit except exactly at the bifurcation point, the unstable limit cycle is born with finite amplitude (Fig.~\ref{fig:fp_bif}C). Details are available in Section \ref{sec:fixed_point_bif}. 

\subsubsection{Multistability between synchronized oscillation and CTLN-like limit cycle}

 \begin{figure*}[ht!]
    \centering
    \includegraphics{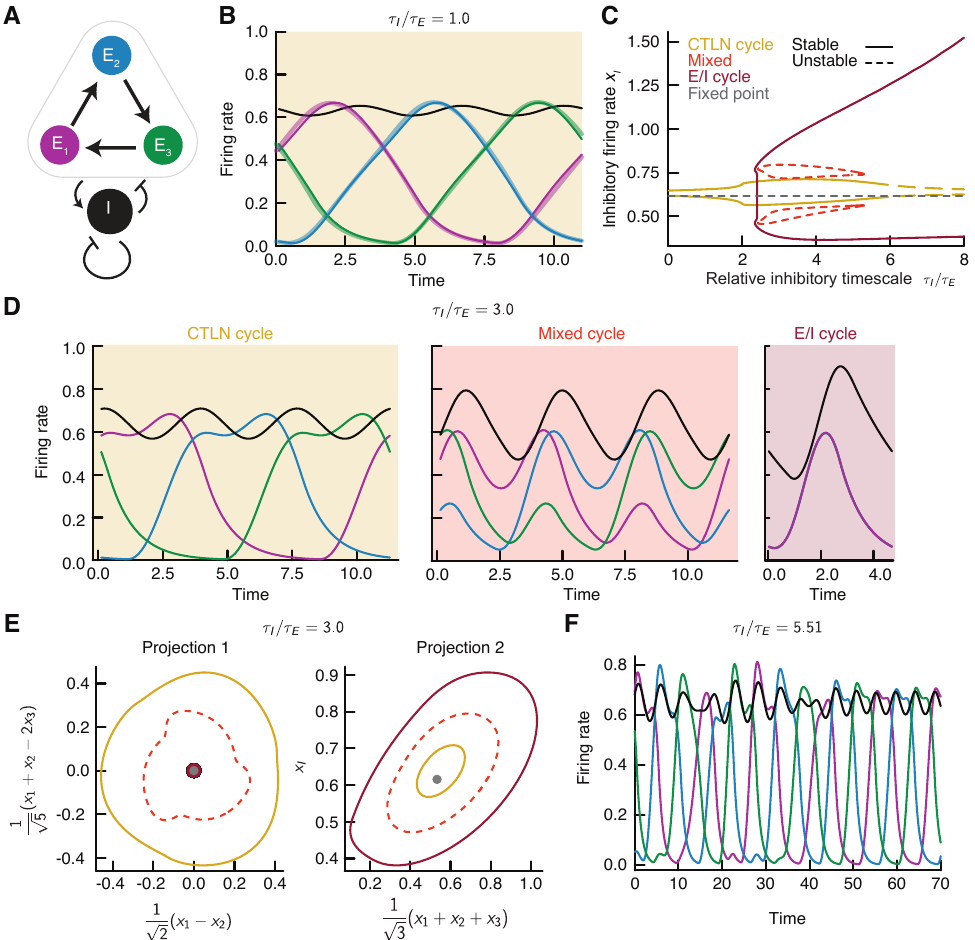}
    \caption{(A) E-I TLN whose underlying graph is a directed 3-cycle. (B) Solid: firing rates of the EI-TLN for $\tau_I/\tau_E = 1.0$. Transparent lines: firing rates in CTLN model with same graph.  (C) Bifurcation diagram for the parameter $\tau_I/\tau_E$. Solid lines indicate stability, dashed indicates instability. Periodic orbits are represented by the minimum and maximum values of the inhibitory firing rate along the orbit.  (D) Three periodic orbits present for $\tau_I/\tau_E = 3.0$. Left, the CTLN cycle, obtained by continuation of stable periodic orbit in panel B. Notice that the shapes of the firing rate curves are deformed relative to those in panel B. Center, the unstable periodic orbit. Right, the E/I cycle, a stable periodic orbit in which all three excitatory neurons have synchronous activity, and the total exciatory and inhibitory populations oscillate. Analogous to the limit cycle described in Fig.~\ref{fig:fp_bif}. (E) Two projections of all three periodic orbits for $\tau_I/\tau_E = 3.0.$ Gold is CTLN cycle, dark red is E/I cycle, dashed orange is unstable periodic orbit separating the other two cycles. (F) Firing rates for a trajectory which follows a long, apparently quasiperiodic transient before approaching the CTLN limit cycle. Parameters: $ J_{EE}^\circlearrowleft = 1.5, \, J_{EE}^\to = 0.75, \, J_{EE}^{\not\to} = 0, \, J_{EI} = -2.25, \, J_{IE} = 2, \, J_{II} = -2, \, \tau_E=1$.}
    \label{fig:3cycle_bif}
\end{figure*}

Next, we extend this analysis beyond fixed points by considering the simplest example of a CTLN with a nontrivial dynamic attractor. This is the CTLN whose graph is a directed three-cycle (Fig.~\ref{fig:3cycle_bif}A), which has a single unstable fixed point and a stable limit cycle for all allowed values of $\epsilon, \delta$ \cite{bel_periodic_2021}.  At the fixed point, the linearized system has one negative eigenvalue and two complex-conjugate eigenvalues with positive real part. Initial conditions near the fixed point spiral outward to the limit cycle.  Because the dynamics of the excitatory populations in the EI-TLN are described by the CTLN equation in the limit where $\tau_I \to 0$, the EI-TLN also has an unstable fixed point and a stable limit cycle when $\tau_I$ is small relative to $\tau_E$. Using numerical continuation from this limit cycle, we investigate how the attractors of the EI-TLN change with $\tau_I$.

% In the EI-TLN, when $\tau_I  \ll \tau_E,$ the fixed point has a two dimensional stable manifold and a two dimensional unstable manifold. 

 When $\tau_I/\tau_E =1$, the EI-TLN has a CTLN-like limit cycle (Fig.~\ref{fig:3cycle_bif}B).  
 We numerically continued this limit cycle, which we call the ``CTLN cycle", for $\tau_E = 1,\, 0.01 \leq \tau_I \leq 8.0$  (Fig.~\ref{fig:3cycle_bif}C, gold). This periodic orbit remains stable up to $\tau_I/\tau_E = $ 6.46, though its shape becomes distorted as $\tau_I$ increases (Fig.~\ref{fig:3cycle_bif}D, left, shown for $\tau_I/\tau_E = 3$). We continued this periodic orbit past its instability and did not find an endpoint. 

 At $\tau_I/\tau_E = 2.4$ , the second pair of complex-conjugate eigenvalues of the fixed point crosses the real axis and the fixed point no longer has any stable direction. At this parameter value, a second limit cycle we call the ``EI-cycle" branches off from the fixed point via a subcritical non-smooth Hopf-like bifurcation (Fig.~\ref{fig:3cycle_bif}C, dark red) reminiscent of the one described in the previous section. Along this limit cycle, all three excitatory neurons have synchronous activity (Fig.~\ref{fig:3cycle_bif}D, right). There is a region of bistability between the CTLN cycle and the E-I cycle. 

 %Because basins of attraction for periodic orbits are open, initial conditions on the basin boundary must follow a trajectory that does not fall into either periodic orbit. 
 For parameters in a large part of the bistable region, from $2.42 \leq \tau_I/\tau_E \leq 5.29$, there are two unstable periodic orbits between the CTLN cycle and the EI cycle, which we call the ``mixed cycles" (Fig.~\ref{fig:3cycle_bif}C, orange dashed; Fig.~\ref{fig:3cycle_bif}D, middle). For $\tau_I/\tau_E = 3.0$, we plot two projections of all three identified periodic orbits in Fig.~\ref{fig:3cycle_bif}E. The mixed cycle is annihilated via a non-smooth fold bifurcation with itself at $\tau_I/\tau_E = 5.29$, which is before the CTLN cycle loses stability. 
 
 Thus, there is a region of bistability between the CTLN cycle and the EI-cycle where we have not been able to find an unstable perioidic orbit separating them. Because the CTLN limit cycle loses stability via a secondary Hopf bifurcation, we expect an unstable quasiperiodic trajectory/unstable invariant torus in this region. In Fig.~\ref{fig:3cycle_bif}F, we plot a trajectory that appears to stay near that invariant torus for some time before eventually entering the CTLN cycle. Details are available in Section \ref{sec:limit_cycle_bif}.

\section{Discussion}
We derived a rate model dynamically equivalent to the CTLN model as a mean-field theory for cluster-average firing rates in an inhibition-stabilized, clustered nonlinear Hawkes model in the limit of large cluster size and fast inhibition. This allows us to predict diverse dynamics of clustered networks using existing results proved for CTLNs. We showed that the CTLN model describes the mean firing rates of excitatory populations in an inhibition-stabilized regime. This allowed us to demonstrate the paradoxical effect, a hallmark of inhibitory stabilization, in networks with multiple excitatory populations and diverse dynamics. Finally, while the equivalence between the CTLN and the clustered spiking networks holds in the limit of fast inhibition, we showed that CTLNs qualitatively predict the dynamics of clustered spiking networks for more reasonable inhibitory timescales.

Many other derivations of rate models from models with a higher level of biological detail are possible--several are reviewed in \cite{ermentrout_firing_2010}. These derivations can be done rigorously in the limit where synapses are slow \cite{ermentrout_reduction_1994, rinzel_activity_1992}, or for large populations with asynchronous activity \cite{shriki_rate_2003}. Our choice of a rate model of the form of Equation \ref{eqn:v_model_spk},
with the summation inside of the nonlinearity, reflects the assumption that the synaptic decay time constant is fast relative to the membrane time constant, while the alternate form assumes that the synaptic decay time is slow relative to the membrane time constant \cite{ermentrout_firing_2010}. Both forms assume that the synaptic kernel's rise time is fast relative to both the synaptic decay time and the membrane time constant \cite{pinto_spatially_2001}. Stochastic spiking models, like the Hawkes process we used here, can also be related to more biologically detailed models \cite{aviel_spiking_2006, ostojic_computational_2024}. Many of these derivations make it possible to relate parameters in the rate model to biophysically meaningful values.  Thus, by connecting the CTLN model to networks that incorporate explicit excitatory and inhibitory populations, as well as spikes, we make it possible to use techniques from the theory of CTLNs to predict the qualitative behavior of more detailed models.

In particular, we can now use \emph{graph rules} that relate a network's structure to its dynamics, proven for CTLNs, in a broader context. We briefly review some of these results here---a more in-depth review can be found in \cite{curto_graph_2023}.  Many graph rules relate whether a subnetwork (in our case a set of clusters) can be co-active at a fixed point to both the structure of the subnetwork and the way the subnetwork is embedded in the full network. For instance, the result we stated in Section \ref{sec:graph_rules}, that a clique supports a stable fixed point if and only if it is target-free, is a graph rule of this type \cite{curto_stable_2024}. A second class of graph rules, including \emph{gluing rules} and \emph{nerve theorems}, predicts the fixed points of networks that are constructed compositionally, i.e., out of smaller networks connected in constrained ways \cite{curto_fixed_2019, santander_nerve_2022}.  

Finally, dynamic attractors of CTLNs can also be partially predicted from graph structure. The simplest example of a CTLN which has a dynamic attractor, the directed 3-cycle, is proven to have a periodic orbit in \cite{bel_periodic_2021}.  Even in this simple case, the proof is nontrivial. Consequently, current methods for predicting dynamic attractors from network structure more generally are heuristic. Dynamic attractors of CTLNs are typically observed to correspond to unstable \emph{core motifs}, which are minimal subnetworks that support an unstable fixed point \cite{parmelee_core_2022}.  When these networks have a cycle-like structure, this cycle-like structure predicts the order in which neurons fire in the attractor. 
% We derived the CTLN model as a dynamical mean-field theory for clustered spiking networks with a pattern of directed connectivity between clusters. This allowed us to construct clustered networks with rich emergent dynamics, such as multistability and limit cycles, and to relate these dynamic attractors to network connectivity via graph rules for CTLNs. We have only scratched the surface of dynamics which can be generated via CTLNs -- other examples of dynamics in CTLNs include chaotic or quasiperiodic trajectories. 
By combining these graph rules, it is possible to construct recurrent networks that perform biologically relevant tasks, such as locomotion, by hand rather than by training \cite{londono_alvarez_juliana_attractor-based_2024,parmelee_sequential_2022}. This results in networks with a modular structure that are amenable to mathematical analysis.  By relating the CTLN model to a model that incorporates spiking activity and explicit inhibitory and excitatory populations, we help to interpret these results about CTLNs in a biological context---for instance, we have shown that it makes more sense to interpret ``neurons" in the CTLN model as populations.

While our derivation of the CTLN model from clustered spiking networks was in the limit where inhibition is fast relative to excitation, we have observed that the qualitative dynamics of CTLNs match the qualitative dynamics of clustered spiking networks for much more reasonable excitatory and inhibitory timescales, including in cases where inhibitory time constants are larger than excitatory time constants. We gave explicit conditions on the ratio between the excitatory and inhibitory timescales required so that fixed points that are stable in the CTLN remain stable in the EI-TLN. Because these conditions depend on the weights of the EI-TLN, and because there are multiple combinations of EI-TLN weights which give rise to the same set of CTLN weights, it is possible to scale the EI-TLN weights in order to stabilize CTLN fixed points for any combination of excitatory and inhibitory time constants. We also note that whether a stable fixed point in the CTLN remains stable in the EI-TLN depends on the number of clusters which are active at the fixed point. By scaling the probability of between-cluster connectivity with network size, we can also ensure that CTLN fixed points remain stable with fixed weights. 

Due to our choice of $1/N$ weight scaling, the clustered spiking network model here converge to deterministic mean-field dynamics in the limit of large $N$. For smaller $N$, this model exhibits finite-size fluctuations. Thus, stable fixed points, limit cycles, and other attractors in the CTLN model become metastable attractors in finite-size clustered networks.
Metastable fixed points in similar networks model various phenomena in the brain \cite{mazzucato_expectation-induced_2019, mazzucato_stimuli_2016, brinkman_metastable_2022, rossi_dynamical_2024}. 
Finite-size fluctuations can shape firing rates, trigger oscillations, and drive transitions between metastable attractors~\cite{schwalger_mind_2019}.
Friedlin-Wentzell theory provides a well-established mathematical framework to estimate transition times between metastable fixed points when fluctuations are small, as for large but finite $N$ \cite{freidlin_random_2012, bressloff_path_2014, brinkman_metastable_2022}. Clustered nonlinear Hawkes networks also exhibit metastability between fixed points and dynamic attractors, as in Fig. \ref{fig:diverse_cluster_dynamics}E. Noise-induced transitions involving dynamic attractors can be estimated using recent extensions of Friedlin-Wentzell theory \cite{fleurantin_dynamical_2023}.

Networks with strong coupling take synaptic weights to scale as $1/\sqrt{N}$ rather than $1/N$~\cite{barral_synaptic_2016}. This heterogeneity generates activity fluctuations that are uncorrelated in the large $N$ limit \cite{sompolinsky_chaos_1988}. This result has been extended to various observed features of neural connectivity, such as block-structured, low-rank, spatial, or correlated connectivity and heavy-tailed synaptic weight distributions \cite{zeraati_topology-dependent_2024, mastrogiuseppe_linking_2018, dahmen_strong_2023, hu_motif_2013, hu_spectrum_2022, aljadeff_transition_2015, stern_dynamics_2014, clark_structure_2025, wardak_extended_2022}. 
%Similarly, asynchronous states arise from a balance between excitatory and inhibitory inputs \cite{brunel_dynamics_2000, sadeh_inhibitory_2021}. 
Our reduction of clustered networks to the CTLN averages over the activity within each cluster, suggesting that a similar reduction might be possible in networks with strong coupling.

% The collective dynamics of a network can be described by the correlations between spike trains and the dimensionality of the population activity \cite{ocker_linking_2017, mazzucato_stimuli_2016}. The clustered networks we focus on here are a specific, tractable model for networks with nonrandom connectivity. More generally, other forms of structure, such as  spatial or low-rank connectivity \cite{zeraati_topology-dependent_2024, mastrogiuseppe_linking_2018} or  connectivity motifs \cite{zhao_synchronization_2011, recanatesi_dimensionality_2019, dahmen_strong_2023, hu_motif_2013, hu_spectrum_2022}, shape network dynamics. The clustered structure we discuss here, is consistent with observations of enriched frequencies of reciprocal pairwise motifs and densely connected three-neuron motifs \cite{dahmen_strong_2023, song_highly_2005, guzman_synaptic_2016, sammons_structure_2024}.

\section*{Acknowledgments}

C.L. is supported by the Swartz Foundation and the BU Center for Systems Neuroscience.  The authors benefited from helpful discussions with Carina Curto and Ryan Goh. 

\bibliographystyle{apsrev4-2}
\bibliography{references}

\onecolumngrid
\appendix
\section{Methods}

\subsection{Clustered spiking networks: microscopic model and mean-field limit}
\label{sec:model_details}
Our model consists of $n$ excitatory clusters with $N_E$ neurons each, reciprocally connected to a common inhibitory population of $N_I$ neurons. The network has additional structure given by a directed graph $G$ on $n$ nodes which describes the pattern of connectivity between the excitatory clusters. Neuron-to-neuron connectivity is described by a block-wise Erd\H{o}s-R\'enyi adjacency matrix A with edge probabilities depending on the neuron type as well as the graph structure $G$:

\begin{align*}
p_{\alpha \beta} = \begin{cases} 
p^\circlearrowleft_{EE} &\mbox { if } \beta = \alpha, \alpha, \beta \mbox{ excitatory}\\
p_{EE}^{\to}  &\mbox { if } \beta \to \alpha \in G,  \alpha, \beta  \mbox{ excitatory}\\
p_{EE}^{\not\to}  &\mbox { if } \beta \not\to \alpha \in G,  \alpha, \beta  \mbox{ excitatory}\\
p_{EI} &\mbox { if } \alpha  \mbox{ excitatory}, \beta \mbox{ inhibitory }\\
p_{IE} &\mbox { if } \alpha \mbox{ inhibitory } , \beta \mbox{ excitatory}\\
p_{II} &\mbox { if } \alpha, \beta \mbox{ inhibitory. }
\end{cases}
\end{align*}
Neuron-to-neuron weights are then given by 

\begin{align*}
    J_{ij} = \frac{A_{ij}\bar J_{\alpha(i)\alpha(j)}}{p_{\alpha(i)\alpha(j)} N_{\alpha(j)}},
\end{align*}
where

\begin{align}
\bar J_{\alpha \beta} = \begin{cases} 
\bar J^\circlearrowleft_{EE} &\mbox { if } \beta = \alpha, \alpha, \beta \mbox{ excitatory}\\
\bar J_{EE}^{\to}  &\mbox { if } \beta \to \alpha \in G,  \alpha, \beta  \mbox{ excitatory}\\
\bar J_{EE}^{\not\to}  &\mbox { if } \beta \not\to \alpha \in G,  \alpha, \beta  \mbox{ excitatory}\\
\bar J_{EI} &\mbox { if } \alpha  \mbox{ excitatory}, \beta \mbox{ inhibitory }\\
\bar J_{IE} &\mbox { if } \alpha \mbox{ inhibitory } , \beta \mbox{ excitatory}\\
\bar J_{II} &\mbox { if } \alpha, \beta \mbox{ inhibitory. }
\end{cases}
\label{app:eqn:weights}
\end{align}
 
 By construction, $\abk{\sum_{j\in \alpha} J_{ij}} = \bar J_{\alpha(i)\alpha(j)}.$ Excitatory weights $\bar J^\circlearrowleft_{EE},\bar J_{EE}^{\to}, \bar J_{EE}^{\not\to}$ and $\bar J_{IE}$ are positive, while inhibitory weights $\bar J_{EI}$ and $\bar J_{II}$ are negative. Connectivity within populations is stronger than connectivity between populations, and connectivity between populations is stronger when there is a directed edge than when not: $\bar J_{EE}^\circlearrowleft > \bar J_{EE}^\to > \bar J_{EE}^{\not\to}$.   

Inhibition in the model is strong enough to cancel out the within-cluster excitation, with 

\begin{align*}
    \bar J_{EE}^{\circlearrowleft} = -\frac{\bar J_{EI}\bar J_{IE}}{1-\bar J_{II}}.
\end{align*}
The network's dynamics are given by the nonlinear Hawkes process 
\begin{align}
    \tau_{i} &\dot v_{\alpha(i)} = - v_i + \sum_{j =1}^N J_{ij} \dot s_j + b_{\alpha(i)}, \\
    \dot s_i &\sim \mathcal{PP} \left(\lfloor v_i\rfloor_+\right).
    \label{eqn:meth_eqn_spiking}
\end{align}
Excitatory and inhibitory neurons have distinct time constants $\tau_E$ and $\tau_I$. 
%This system is a nonlinear Hawkes process  \cite{stiefel_mean-field_2023, pfaffelhuber_mean-field_2022, zhu_large_2015, chevallier_mean-field_2017, delattre_hawkes_2016, heesen_fluctuation_2021, ditlevsen_multi-class_2017}.

In Fig. \ref{fig:diverse_cluster_dynamics}, we show raster plots for three clustered spiking networks. The adjacency matrices for the graphs $G$ determining connectivity between the excitatory clusters are \begin{align*}
   \bar A_1 =  \begin{pmatrix} 
    0 & 0& 0 & 0 &0 & 0\\
    0 & 0& 0 & 0 &0 & 0\\
    0 & 0& 0 & 0 &0 & 0\\
    0 & 0& 0 & 0 &0 & 0\\
    0 & 0& 0 & 0 &0 & 0\\
    0 & 0& 0 & 0 &0 & 0
    \end{pmatrix} && 
      \bar A_2 =\begin{pmatrix} 
    0 & 1& 0 & 1 &0 & 0\\
    1 & 0& 1 & 1 &0 & 0\\
    0 & 1& 0 & 1 &0 & 0\\
    1 & 1 & 1 & 0 &1 & 0\\
    0 & 0& 0 & 1 &0 & 0\\
    0 & 0& 0 & 0 &0 & 0\end{pmatrix} && 
      \bar A_3 =\begin{pmatrix} 
    0 & 0& 1 & 1 \\
    1 & 0& 0 & 0 \\
    0 & 1& 0 & 0 \\
    0 & 0 & 0 & 1\\
  \end{pmatrix} & 
\end{align*} 
for panels A/B, C/D, and E/F respectively. The simulation parameters are $N_E = N_I = 1000$, $\tau_E = 40  \unit{ms}, \tau_I = 20 \unit{ms}$, $p_{EE}= .2, p_{EI}= .8, p_{IE}= .8,  p_{II}= .8, p_{EE}^\circlearrowleft =  .8$ , $J_{EE}^\circlearrowleft = 1.5$, $J_{EE}^\to = 0.75$, $J_{EE}^{\not\to} = 0$, $J_{EI} = -3.0$, $J_{II} = -3.0$, $J_{IE} = 2.0, b_E = 0.1$, $b_I = 0$.  Note that while $b_E, b_I$ have units of voltage and the synaptic weights $\bar J_{\alpha\beta}$ have units of voltage/(spikes/ms), we absorb the voltage units into the function taking voltage to firing rate, thus we leave these quantities non-dimensional.  At the times marked with lightning bolts ($t = 500, 1000, 1500, 2000, 2500$ in panel B, $t = 1000, 2000, 3000$ in panel D, and $t = 1000, 2000$ in panel F), we applied a transient stimulus to the population or populations indicated by the color of the lightning bolt. In panel B, the stimulus was of duration 100 ms and strength $0.25$. In panel D, the stimulus was of duration $50 \unit{ms}$ and strength 0.25. In panel F,  the stimulus was of duration $40 \unit{ms}$ and strength 0.2. These different stimulus strengths and durations were chosen due to the differing depths of the basins of each network's attractors. 

We next present a heuristic derivation of the mean-field limit for this model in terms of the voltage dynamics.
To describe the evolution of the population-averaged voltages, $v_\alpha :=\abk{v_i}_{i\in \alpha}$\, we take the expected value on both sides of \ref{eqn:meth_eqn_spiking}, obtaining 
\begin{align}
        \tau_{i} &v_{\alpha}= - v_{\alpha}+ \abk{\sum_{j =1}^N J_{ij} \dot s_j}_{i\in \alpha} + b_{\alpha} 
\end{align}
This expectation is being taken over the random connectivity matrix, the stochastic spike emission, and the identity of neuron $i$ within cluster $\alpha$. Now, we make the assumption that in the large-N limit, the connectivity and the spike emission become independent, and note that $\abk{\dot s_j}_{j\in \beta} = \abk {\lfloor v_j\rfloor_+}_{j\in \beta}$. This gives us the approximation
\begin{align}
        \tau_{i} &v_{\alpha}= - v_{\alpha}+ \sum_{\beta = 1}^{n+1} \bar J_{\alpha \beta} \abk{\lfloor v_j\rfloor_+}_{j\in \beta} + b_{\alpha}. 
\end{align}
Finally, we make the assumption that in the large-$N$ limit, the value of $v_j$ concentrates around its expectation. This allows us to switch the order of the expectation and the threshold nonlinearity, obtaining the desired mean- field equation
\begin{align}
    \tau_\alpha \dot v_{\alpha}= -v_{\alpha}  + \sum_{j=1}^N \bar J_{\alpha\beta} \lfloor v_\beta \rfloor_+ + b_\alpha. \label{eqn:v_mean_field}
\end{align}
These mean-field dynamics can also be obtain by averaging the generating functional for Eq.~\ref{eqn:meth_eqn_spiking} over realizations of the connectivity in the large $N$ limit \cite{helias_statistical_2020}. In that case, the neurons' net synaptic inputs concentrate around the mean and the generating functional factorizes across neurons ~\cite{ocker_republished_2023}.
Corresponding mean-field limits for similar nonlinear Hawkes models are proven in \cite{stiefel_mean-field_2023, pfaffelhuber_mean-field_2022, zhu_large_2015, chevallier_mean-field_2017, delattre_hawkes_2016, heesen_fluctuation_2021, ditlevsen_multi-class_2017}.

\subsection{Equivalence between two formulations}
\label{sec:trans}
Miller and Fumarola \cite{miller_mathematical_2012} establish the equivalence of two commonly used models, those of the form $\tau \dot v = -v + J f(v) + b$ and $\tau \dot x = -x + f(J x+ c).$ In the restricted case when $J$ is invertible and $ b$ is constant, this equivalence via the map $b = c$, $v = Jx + c$ was established earlier in \cite{beer_parameter_2006}.  Here, we focus on invertible $J$ (but not necessarily constant $b$) because this is what occurs generically.

In this case, the map is $v = Jx + c$, $x = J\inv (x - c)$, as in \cite{beer_parameter_2006}. However, the input transforms as $\tau \dot c = -c + b$. Notice that this means that the input to the $x$-model is a low-pass filter of the input in the equivalent $v$- model, with $c(t) = \frac 1\tau \int_0^t e^{(t'-t)/\tau} b(t')dt'$. 
%When generating Figs. 1 and 2, we applied this transformation to find the input to the $x$-model (used predict population-averaged rates) that was equivalent to the input given to the spiking $v$-model.

In Fig. \ref{fig:mean_field}B, we plot population-mean firing rates, computed by averaging $\lfloor v_i \rfloor_+$ over all neurons in each cluster, against predictions made by simulating the EI-TLN model of Eq. \ref{eqn:ei_tln}. Because we used time varying input, the input to the EI-TLN is a low-pass-filtered version of the mean  input to the spiking model. We then plot  $\lfloor v\rfloor_+ = \lfloor J x + c\rfloor_+$ against the population-averaged rates. 

Hereafter, we only consider the case where the external input $b$ to the $v$-model is constant, so the input to the $x$-model is $c = b$. Thus, we use $b$ to denote external inputs regardless of formulation hereafter. 

This transformation of the input explains why Eq. \ref{eqn:meth_eqn_spiking} is not dynamically equivalent to the $x$-model of the form 
\begin{align}
    \tau_{\alpha (i)} \dot x_i = -x_i+ \left\lfloor \sum_{j=1}^N J_{ij} \dot s_j + b_{\alpha (i)} \right\rfloor_+ . \label{eqn:x_spiking}
\end{align}
 We can re-write Eq. \ref{eqn:v_model_spk} as 
 \begin{align*}
         \tau_{i} &\dot v_{\alpha(i)} = - v_i + \sum_{j =1}^N J_{ij}\lfloor v_j\rfloor_++ \left(b_{\alpha(i)}  + \sum_{i=1}^N J_{ij}(\dot s_j - \lfloor v_j\rfloor_+)\right),
 \end{align*}
 i.e., separating the stochastic spiking into an external input $b_i(t) = b_{\alpha(i)}  + \sum_{i=1}^N J_{ij}(\dot s_j - \lfloor v_j\rfloor_+)$. Then the equivalent input to the $x$- model is the low-pass filtered input $c_i(t)=  \frac 1\tau \int_0^t e^{(t'-t)/\tau} b_i(t')dt'$. Indeed, the $x$-model \ref{eqn:x_spiking} does not have the EI-TLN as a mean-field theory because variance of the input to the threshold nonlinearity, $\sum_{j=1}^N J_{ij} \dot s_j + b_{\alpha (i)}$, has infinite variance which does not converge to any finite value even when $N\to \infty$. This means that it is not possible to exchange the order of the threshold nonlinearity with the expectation, as we did in the derivation of \ref{eqn:v_mean_field}.  A spiking model which has the $x$-model form would need to incorporate an additional filter: 
 \begin{align}
    \tau_{\alpha (i)} \dot x_i = -x_i+ \left \lfloor \sum_{j=1}^N J_{ij} (g *\dot s_j) + b_{\alpha (i)} \right \rfloor_+ ,
\end{align}
where the term $ (g *\dot s_j)$ denotes a convolution, and $g$ is some synaptic filter, i.e. $g(t) = \frac{1}{\tau_s}e^{t/\tau_s}$. We could then obtain the EI-TLN model as a mean-field theory of this model, in the additional limit where the filter width $\tau_s$ goes to zero.

\subsection{The EI-TLN model and the CTLN model}
\label{sec:ei_tln_to_ctln}
In the previous section, we showed that in the limit of large population size, the expected firing rates evolve according to deterministic, $n+1$ dimensional dynamics. We now study this system, and relate it to the CTLN model. Letting $x_\alpha = \abk{x_i}_{i\in \alpha}$ we have 

\begin{align}
\tau_{\alpha}\dot x_{\alpha} = -x_\alpha + \left \lfloor \sum_{\beta \in [n+1]} \bar J_{\alpha\beta}x_{\beta} + b_{\beta}\right \rfloor _+  , 
\label{app:eqn:E_I_ctln}
\end{align}
where weights $\bar J_{\alpha\beta}$ is defined as in Equation \ref{app:eqn:weights}, the $n$ excitatory populations are indexed by $\alpha \in [n]$, and the inhibitory population is indexed as $\alpha = n+1$. We term this model the Excitatory-Inhibitory Threshold Linear Network (EI-TLN) because its dynamics are determined by the directed graph $G$.  

Next, we show that in the limit of fast inhibition, $\tau_I/\tau_E\to 0$, for a certain choice of weights the dynamics on the excitatory units in the EI-TLN model converge to the original CTLN model, introduced in \cite{curto_pattern_2016, curto_fixed_2019, morrison_diversity_2024}. This will make it possible to apply the theory developed relating fixed points and dynamics of CTLNs to properties of graph structure to study the dynamics of clustered spiking networks. 

The CTLN model describes a network with specific excitatory connectivity against a background of global inhibition using a pattern of strong and weak inhibition.  The weights $W_{\alpha\beta}$ in a CTLN are determined by a directed graph $G$, and given by 
\begin{align*}
W_{\alpha\beta} = \begin{cases} 
0 &\mbox { if } \alpha =\beta\\
-1 + \varepsilon &\mbox { if } \beta \to \alpha \mbox { in } G\\
-1 + \delta  &\mbox { if } \beta \not\to \alpha\mbox { in } G, \\
\end{cases}
\end{align*}
where $\varepsilon, \delta$ satisfy $0 \leq \epsilon < \frac{\delta}{\delta+1}$. All neurons in the CTLN receive a uniform input $\theta > 0.$

 In the limit of fast inhibition, $\tau_I/\tau_E \approx 0$, we can approximate the inhibitory neuron's rate as being at its fixed point value, 
\begin{align*}
x_{n+1} = [\bar Jx + \theta_I]_+ = \left \lfloor \bar J_{IE}\sum_{\alpha = 1}^n x_\alpha + \bar J_{II}x_{n+1} + b_{I}\right \rfloor _+
\end{align*}
We will drop the the threshold, assuming $x_I > 0$, and solve for $x_I$, obtaining 
\begin{align*}
    x_{n+1} = \frac{\bar J_{IE} \sum_{\alpha = 1}^n x_\alpha + b_I}{1 - \bar J_{II}}
\end{align*}
When $1 - \bar J_{II} > 0$, $b_I \geq 0$, this yields a positive solution for $x_{n+1}$, thus at the fixed point, $\bar J_{IE}\sum_{\alpha = 1}^n x_\alpha + \bar J_{II}x_{n+1} + b_{I} = x_{n+1} > 0$, so dropping the threshold was appropriate. 
Substituting this back into the equations for the excitatory neurons, we have 

\begin{align*}
\tau_E \dot x_e &= -x_e + \left \lfloor \bar J^\circlearrowleft_{EE} x_e + \sum_{j \to e} \bar J_{EE}^\to x_j  +  \sum_{j \not \to e} \bar J_{EE}^{\not\to} x_j + \frac{\bar J_{EI} \bar J_{IE}}{1 - \bar J_{II}}\sum_{j\in E} x_j + \frac{\bar J_{EI}}{1 - \bar J_{II}} b_I +  b_E\right \rfloor _+\\
\tau_E \dot x_e &= -x_e +\\ 
&\left \lfloor 
\left(\bar J^\circlearrowleft_{EE} + \frac{\bar J_{EI} \bar J_{IE}}{1 - \bar J_{II}}\right)x_e + 
\sum_{j \to e}
\left(\bar J_{EE}^\to + \frac{ \bar J_{EI} \bar J_{IE}}{1 - \bar J_{II}}\right) x_j 
+ \sum_{j\not\to e} \left(\bar J_{EE}^{\not\to} + \frac{\bar J_{EI} \bar J_{IE}}{1 - \bar J_{II}}\right)x_j + \frac{\bar J_{EI}}{1 - \bar J_{II}} b_I +  b_E\right \rfloor _+ 
\end{align*}
Thus, when 
\begin{align*} 
\bar J^\circlearrowleft_{EE} + \frac{\bar J_{EI} \bar J_{IE}}{1 - \bar J_{II}} = 0\\
\bar J_{EE}^\to + \frac{ \bar J_{EI} \bar J_{IE}}{1 - \bar J_{II}} &= -1 + \epsilon\\
\bar J_{EE}^{\not\to} + \frac{ \bar J_{EI} \bar J_{IE}}{1 - \bar J_{II}} &=  -1 -\delta\\
\frac{\bar J_{EI}}{1 - \bar J_{II}} b_I +  b_E &= \theta
\end{align*}
the $n$-neuron E-I model limits to the CTLN as $\tau_I \to 0. $ Notice that this defines a three-dimensional manifold of six-dimensional parameter space.
Figure \ref{fig:mean_field}D shows the cluster average firing rates for excitatory clusters against rates predicted using the CTLN model. Translation from the $x$-model to the $v$-model is done as in Fig. \ref{fig:mean_field}B. 

\subsection{Bounds on total excitatory, inhibitory activity in EI-TLN}
\label{sec:total_bound}
In the previous section, we informally showed that the EI-TLN model approaches the CTLN model in the limit of fast inhibition. Now, we make this argument more formal and show that when $\tau_I \leq \tau_I^*$ for some fixed $\tau_I^* > 0$ (dependent on parameters and network structure), there is a forward-invariant region on which certain properties of the CTLN still hold. 
In particular,  on this region bounds on total excitatory activity from \cite{lienkaemper_combinatorial_2022} (Eq. \ref{eqn:ctln_total_bound}) apply to the EI-TLN model and the inhibitory  population stays above threshold, allowing us to write the dynamics of the excitatory neurons as an  integro-differential equation equation which does not reference the inhibitory firing rate. 

\begin{prop}
   \label{prop:ctln_bound} Let a directed graph $G$, together with parameters $ \bar J_{EE}^\circlearrowleft, \bar J_{EE}^\to, \bar J_{EE}^{\not\to}, \bar J_{EI}, \bar J_{IE}, \bar J_{II}$ define an EI-TLN.   Without loss of generality, assume $\tau_E = 1$.   Define  $x_E^{\min} = \frac{\frac{\bar J_{EI}}{1 - \bar J_{II}} b_I +  b_E}{  \bar J_{EE}^\circlearrowleft - \bar J_{EE}^{\not\to}}$ and  $x_E^{\max} = \frac{\frac{\bar J_{EI}}{1 - \bar J_{II}} b_I +  b_E}{  \bar J_{EE}^\circlearrowleft  - \bar J_{EE}^\to }$.  Then there exists a parameter-dependent threshold value $\tau_I^*$ such that for all $\tau_I \leq \tau_I^*$, there is a forward invariant set $A \subset B \cap T_{I}^+$, where $B$ is the ``box" defined by 
    \begin{align*}
    B =     \left \{\vec x \,\middle \vert\,     x_E^{\min}< \sum_{\alpha \mbox{ exc.}} x_\alpha(t) <    x_E^{\max} \right \} 
        \cap  \left \{\vec x  \,\middle \vert\,      \frac{\bar J_{IE}}{  \left( 1- \bar J_{II}\right)} x_E^{\min} < x_I<    \frac{\bar J_{IE}}{  \left( 1- \bar J_{II}\right)} x_E^{\max} \right \} \cap \{\vec x \mid x_{\alpha} \geq 0, \alpha \in [n+1] \}
    \end{align*}
    and $T_I^+$ is the region where $\dot x_I$ is above threshold, that is,
    \begin{align*}
        T_I^+ = \{\vec x \mid \bar J_{IE} \sum_{\alpha \exc} x_{\alpha} + \bar J_{II}x_I + b_I \geq 0\} .
    \end{align*}
\end{prop}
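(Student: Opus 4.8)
The plan is to view Eq.~\ref{app:eqn:E_I_ctln} as a singular perturbation of the CTLN in the fast-inhibition parameter $\tau_I$ and to build $A$ as a thin slab, along the inhibitory coordinate, over a forward-invariant CTLN trapping region. On $T_I^+$ the inhibitory unit is above threshold, so its equation is linear; writing $s:=\sum_{\alpha\,\exc}x_\alpha$, introduce the inhibitory nullcline $x_I^\star(s):=\frac{\bar J_{IE}s+b_I}{1-\bar J_{II}}$ and the fast error $e:=x_I-x_I^\star(s)$, which obeys $\tau_I\dot e=-(1-\bar J_{II})e-\tau_I\frac{\bar J_{IE}}{1-\bar J_{II}}\dot s$. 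Substituting $x_I=x_I^\star(s)+e$ into the excitatory equations and using the balance identity Eq.~\ref{eqn:ei_balance} together with the weight translations of Sec.~\ref{sec:ei_tln_to_ctln}, the excitatory input bracket collapses \emph{exactly} to $\sum_\beta W_{\alpha\beta}x_\beta+b+\bar J_{EI}e$, with $W$ the CTLN weights of Eq.~\ref{eqn:ctln_weights} and $b=\frac{\bar J_{EI}}{1-\bar J_{II}}b_I+b_E$. So on $T_I^+$ the dynamics read
\begin{align*}
\dot x_\alpha &= -x_\alpha+\left\lfloor\;\textstyle\sum_\beta W_{\alpha\beta}x_\beta+b+\bar J_{EI}e\;\right\rfloor_+\qquad(\alpha\,\exc),\\
\tau_I\dot e &= -(1-\bar J_{II})e-\tau_I\,\tfrac{\bar J_{IE}}{1-\bar J_{II}}\,\dot s.
\end{align*}
In words: the excitatory block is a genuine CTLN whose uniform drive $b$ is perturbed by $\bar J_{EI}e$, while $e$ relaxes at rate $(1-\bar J_{II})/\tau_I$ under forcing by $\dot s$, which is $O(1)$ and bounded uniformly in $\tau_I$ on bounded sets. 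Moreover $x_I^\star(s)$ sits a fixed positive distance $\tfrac{\bar J_{IE}s+b_I}{-\bar J_{II}(1-\bar J_{II})}$ inside $T_I^+$, so $|e|\le\rho$ with $\rho$ small keeps the orbit in $T_I^+$.

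\textbf{The candidate set $A$.} By the total-activity bound Eq.~\ref{eqn:ctln_total_bound} of \cite{lienkaemper_combinatorial_2022}, the unperturbed CTLN is dissipative, with global attractor contained in $\{x_\alpha\ge0\}\cap\{x_E^{\min}\le s\le x_E^{\max}\}$, where $x_E^{\min}=b/(1+\delta)$ and $x_E^{\max}=b/(1-\epsilon)$ (using $\bar J_{EE}^\circlearrowleft-\bar J_{EE}^\to=1-\epsilon$ and $\bar J_{EE}^\circlearrowleft-\bar J_{EE}^{\not\to}=1+\delta$). Take $N$ to be a compact forward-invariant trapping neighborhood of that attractor, thin enough to lie in $\{x_\alpha\ge0\}$ and in a closed sub-interval $[s_-,s_+]\subset(x_E^{\min},x_E^{\max})$ of total activity; on $\partial N$ the CTLN field is then tangent on the faces $\{x_\alpha=0\}$ and strictly inward-pointing with a uniform margin $c_0>0$ elsewhere. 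Set
\begin{align*}
A:=\{\vec x:\ (x_1,\dots,x_n)\in N,\ |x_I-x_I^\star(s)|\le\rho\}.
\end{align*}
For $\rho$ small this is compact with $A\subset B\cap T_I^+$: taking $b_I=0$ as in the paper's simulations, the $x_I$-window of $B$ is the open interval $(x_I^\star(x_E^{\min}),x_I^\star(x_E^{\max}))$, which contains $x_I^\star([s_-,s_+])$ with room for the $\rho$-thickening.

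\textbf{Forward invariance.} I would check $A$ facet by facet. On $\{x_\alpha=0\}$, $\dot x_\alpha=\lfloor\cdot\rfloor_+\ge0$ for the full system as well, so orbits do not leave. On the non-orthant part of $\{(x_1,\dots,x_n)\in\partial N\}$, the threshold is $1$-Lipschitz, so the $\bar J_{EI}e$-perturbation changes the excitatory velocity by at most $\sqrt{n}\,|\bar J_{EI}|\,\rho$ in norm, and choosing $\rho<c_0/(\sqrt{n}\,|\bar J_{EI}|)$ keeps the perturbed field strictly inward-pointing there. On $\{e=\rho\}$, $\tau_I\dot e\le-(1-\bar J_{II})\rho+\tau_I\,|\bar J_{IE}/(1-\bar J_{II})|\,C<0$, where $C:=\sup_A|\dot s|<\infty$ is independent of $\tau_I$, provided $\tau_I\le\tau_I^\star:=\tfrac{(1-\bar J_{II})\rho}{|\bar J_{IE}/(1-\bar J_{II})|\,C}$; symmetrically $\dot e>0$ on $\{e=-\rho\}$. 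Since the perturbed field then points into $A$ on all of $\partial A$ (at edges/corners use the tangent-cone criterion), $A$ is forward invariant for every $\tau_I\le\tau_I^\star$.

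\textbf{Main obstacle.} The substance is the two-way slow--fast coupling on $\partial A$: $e$ enters the excitatory block as an input perturbation (controlled by Lipschitz-vs-margin) while $\dot s$ forces the $e$-equation (controlled by the uniform bound $C$), so the two smallness choices must be made in order --- $\rho$ first (from $c_0$, and so that $A\subset B\cap T_I^+$), then $\tau_I^\star$ (from $\rho$ and $C$). The genuinely delicate step is landing $A$ \emph{strictly} inside the open box $B$ rather than its closure, since the trapping set that Eq.~\ref{eqn:ctln_total_bound} directly supplies is a priori a hair wider than $(x_E^{\min},x_E^{\max})$ in the $s$-direction; this is what forces the detour through the CTLN's global attractor, which (outside the degenerate, zero-eigenvalue cases) already lies strictly inside $\{x_E^{\min}<s<x_E^{\max}\}$, so a sufficiently thin trapping neighborhood of it fits inside $B$ --- equivalently, one may take $A$ to be a thin forward-invariant neighborhood of the EI-TLN's own attractor and invoke upper semicontinuity of attractors as $\tau_I\to0$.
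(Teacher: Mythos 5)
Your proposal follows the same overall strategy as the paper's proof: restrict to the region where the inhibitory unit is linear, build $A$ as a thin slab around the inhibitory nullcline, check forward invariance facet by facet, and use the smallness of $\tau_I$ only on the top and bottom of the slab, where the fast relaxation of $x_I$ toward the nullcline must beat the $O(1)$ forcing by $\dot s$. Your computation on $\{e=\pm\rho\}$ is exactly the paper's check on its translated nullcline hyperplanes $H_I^{\nullh,\pm}$, and your ordering of the smallness choices ($\rho$ first, then $\tau_I^*$) matches. The one genuine divergence is the choice of the excitatory faces of $A$. The paper keeps the faces $\{\sum_{\alpha\,\exc}x_\alpha = x_E^{\min}\}$ and $\{\sum_{\alpha\,\exc}x_\alpha = x_E^{\max}\}$ of the box itself, observes that on the nullcline the excitatory dynamics are exactly the CTLN, invokes the strict sign of $\sum_{\alpha\,\exc}\dot x_\alpha$ from Corollary 9.2 of \cite{lienkaemper_combinatorial_2022} at the corners where the nullcline meets these faces, and extends by continuity to a thin slab. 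You instead take the excitatory cross-section to be a compact trapping neighborhood $N$ of the CTLN's global attractor with a uniform inward margin $c_0$, and absorb the $\bar J_{EI}e$ perturbation by a Lipschitz estimate. Your slow--fast normal form (the exact reduction to ``CTLN plus $\bar J_{EI}e$'') is cleaner and more explicit than the paper's geometric description, and the Lipschitz-versus-margin step is more quantitative than the paper's appeal to continuity near the corners.

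However, your route has a gap that the paper's avoids: the existence of a compact forward-invariant $N$ whose boundary (off the coordinate faces) carries a \emph{uniform} strict inward margin $c_0>0$ is asserted rather than constructed; for a merely Lipschitz, non-smooth vector field this requires a smooth Lyapunov-type function for the attractor or an equivalent regularization. Relatedly, the claim that the global attractor lies strictly inside $(x_E^{\min},x_E^{\max})$ in total activity needs its own argument: the bound of Eq.~\ref{eqn:ctln_total_bound} holds for $t>T$, so omega-limit sets a priori satisfy only the closed inequalities, and you must use the strict sign of $\dot s$ on the bounding hyperplanes to exclude them. The paper sidesteps both issues by using the box faces directly, where the needed sign is supplied pointwise by the cited corollary. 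If you replace $N$ by the closed set $\{x_E^{\min}\le s\le x_E^{\max},\ x_\alpha\ge 0\}$ and run your Lipschitz estimate against the strict sign of $\dot s$ on its two $s$-faces restricted to the slab, your proof closes and essentially coincides with the paper's. Your observation that the $x_I$-window of $B$ only lines up with the nullcline when $b_I=0$ is a fair catch; the paper's own corner definitions carry the same implicit assumption.
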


\begin{figure}[ht!]
    \centering
    \includegraphics{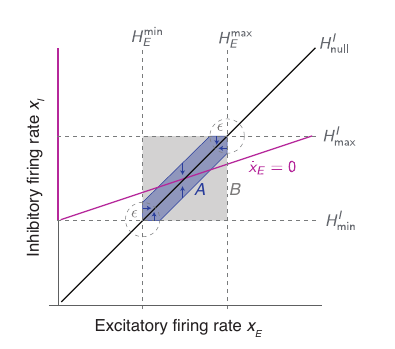}
    \caption{
   Schematic figure for proof of Proposition \ref{prop:ctln_bound}, illustrated for the two-dimensional case of one excitatory population and one inhibitory population. 
    \label{fig:ctln_bound}} 
\end{figure}

\begin{proof}
We construct the region $A \subset B\cap T_I^+$ as follows. First, we define the hyperplanes 

\begin{align*}
    H_E^{\min} = \{\vec x\mid \sum_{\alpha \exc} x_\alpha= x_E^{\min}\} && H_E^{\max} = \{\vec x\mid  \sum_{\alpha \exc} x_\alpha =x_E^{\max}\}&& H_I^{\min} = \{\vec x \mid x_I = x_I^{\min}\}&& H_I^{\max} = \{\vec x \mid x_I = x_I^{\min}\}
\end{align*}
that, together with the coordinate hyperplanes, form the boundary of the box $B$. 
Let  $H_I^{\mathrm{null}}$ denote the portion of the $x_I$ nullcline which is contained in the positive orthant. Notice that $H_I^{\nullh}$ intersects two of the corners of the box $B$: 
$C^{\min} = H_I^{\nullh} \cap H_I^{\max} \cap H_E^{\max} \cap \R^{n+1}_\geq\neq \emptyset$ and 
$C^{\max} = H_I^{\nullh} \cap H_I^{\min} \cap H_E^{\min} \cap\R^{n+1}_\geq\neq \emptyset$. 

Further, notice that on $H_I^{\nullh}$,  for all excitatory neurons $\alpha$, $\dot x_{\alpha}$ is given by the CTLN equations. Thus, we have that $\sum_{\alpha \exc} \dot x_{\alpha} > 0$  on $C^{\min}$ and   $\sum_{\alpha \exc} \dot x_{\alpha} < 0$ on $C^{\max}$ because these inequalities hold in the original CTLN model (\cite{lienkaemper_combinatorial_2022}, Corollary 9.2). 
Because derivatives are continuous in threshold linear networks, this means that for some $\epsilon > 0$,  there are open $\epsilon$-neighborhoods of $C^{\min}$, $C^{\max}$ where $\sum_{\alpha \exc} \dot x_{\alpha} > 0$ and $\sum_{\alpha \exc} \dot x_{\alpha} < 0$ respectively.  Next, notice that by choosing an appropriate $\epsilon,$ we can ensure that both of these neighborhoods are fully contained on the positive side of $T_I^+$, i.e., the region where the inhibitory population is above threshold. 

Thus, we define a hyperplane $H_I^{\nullh, +}$  by translating $H_I^{\nullh}$ upwards (in the positive $x_I$ direction) so that the intersection of $H_I^{\nullh, +}$ and $H_E^{\min}$ remains within this $\epsilon$-neighborhood. 
We define $H_I^{\nullh, -}$ analogously. 

Now, we define $A$ to be the intersection of $B$ with the region bounded by $H_I^{\nullh, +}, H_I^{\nullh, -}$. We show that there exists a threshold value of $\tau_I^*$ such that for all $\tau_I \leq \tau_I^*$, $A$ is a forward-invariant set. To do this, we need to check that for each hyperplane bounding $A$,  the dot product of $(\dot x_1, \ldots, \dot x_n, \dot x_{n+1})$ with the normal vector of the hyperplane (oriented towards the inside of $A$) is positive by for large enough $\tau_I.$ 

There are $6 $ hyperplanes to check: $ H_E^{\min}, H_E^{\max}, H_I^{\min},H_I^{\max},  H_I^{\nullh +},$ and $ H_I^{\nullh -}$. 
First, we note that by the way we have chosen $\epsilon$, $\sum_{\alpha \exc} \dot x_\alpha > 0$  on the portion of $H_E^{\min}$ which forms the boundary of $A$ and  $\sum_{\alpha \exc} \dot x_\alpha < 0$ on the portion of $H_E^{\max}$ which forms the boundary of $A$. Note that this also includes the corners. Thus, the EI-TLN vector field points into $A$ along $H_E^{\min}$ and $H_E^{\max}$. 

Next, we note that, where they form the boundary for $A$, $H_I^{\max}$ is above the $H_I^{\nullh}$ and $H_I^{\min}$ is below $H_I^{\nullh}$, so $\dot x_I$ is positive on $H_I^{\min}$ and negative on $H_I^{\max}$. Further, the normal vector for $H_I^{\max}$ is pointed entirely in the $-x_I$ direction while the normal vector for $H_I^{\min}$ is entirely in the $+x_I$ direction.  Thus, for any value of $\tau_I$, the EI-TLN vector field points in to $A$. 

Finally, we check $H_I^{\nullh+}$ and $H_I^{\nullh-}$. 
Notice that the normal vector for $H_I^{\nullh +}$ ($H_I^{\nullh -}$) has a negative (positive) entry in the $x_I$ coordinate. Further, because $H_I^{\nullh +}$ ($H_I^{\nullh -}$) is above (below) $H_I^{\null}$ , $\dot x_I$ is negative on  $H_I^{\nullh +}$  and positive on $H_I^{\nullh -}$ .  Further, scaling $\tau_I$ while fixing other parameters does not affect the excitatory derivatives $\dot x_1, \ldots, \dot x_n$. Because $A$ is compact,  $|\sum_{\alpha \exc}x_{\alpha}|$  attains a finite maximum on $H_{I}^{\nullh +} \cap A, H_{I}^{\nullh -} \cap A$. Likewise, because $H_I^{\nullh +}, H_I^{\nullh -}$ are disjoint from $H_I^{\nullh}$, $\dot x_I$ attains a non-zero minimum value on $H_{I}^{\nullh +} \cap A, H_{I}^{\nullh -} \cap A$. Thus, by scaling $\tau_I$, it is possible to guarantee that $\dot {\vec x} $ has a positive dot product with $H_{I}^{\nullh +} \cap A$ and $H_{I}^{\nullh -} \cap A$.

\end{proof}
\subsection{Paradoxical response in EI-TLN}
\label{sec:app_paradox}
We prove that in the limit where $\tau_I/\tau_E \to 0$, the EI-TLN exhibits the paradoxical response: excitatory stimulation of the inhibitory population results in a \emph{decrease} in inhibitory firing rates. Although interpretation of this statement is straightforward for systems with a single, inhibition-stabilized fixed point, we need to clarify what this means in a system with a dynamic attractor. What we show is that for any EI-TLN with sufficiently fast inhibition, increasing the input to the inhibitory neurons scales down the inhibitory firing rates on all attractors.

\begin{prop}
Let a directed graph $G$, together with the parameters $ \bar J_{EE}^\circlearrowleft, \bar J_{EE}^\to, \bar J_{EE}^{\not\to}, \bar J_{EI}, \bar J_{IE}, \bar J_{II}$, define an EI-TLN. Let $\tau_I \leq \tau_I^*$, where $\tau_I^*$ is the threshold of inhibitory timescale described in Proposition \ref{prop:ctln_bound}.
Let $\vec x(t)$ be a solution to  the EI-TLN, with external input $b_E, b_I$, which starts in the forward invariant set $A$ defined in Proposition \ref{prop:ctln_bound}. 
A solution $\vec y(t)$ to the EI-TLN with external input $b_E, b_I + \Delta b_I$ can be constructed by scaling the entries of $\vec x(t)$ as:

\begin{align*}
    \vec y(t) = \begin{pmatrix}
        \kappa \vec x_E(t)\\
        \kappa x_I(t)  + \frac{(1 - \kappa)b_I + \Delta b_I}{1 - J_{II}}
    \end{pmatrix}
\end{align*}
where
\begin{align*}
    \kappa = \frac{\frac{\bar J_{EI}}{1 - \bar J_{II}}(b_I + \Delta b_I) + b_E}{\frac{\bar J_{EI}}{1 - \bar J_{II}}b_I  + b_E},
\end{align*}
$\vec x_E(t) = (x_1(t), \ldots, x_{n}(t)) $, and $x_I(t) = x_{n+1}(t)$. When $\Delta b_I \geq 0$, $y_I(t) < x_I(t)$ for all $t$.

\end{prop}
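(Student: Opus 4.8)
The plan is to verify \emph{directly} that the rescaled trajectory $\vec y(t)$ solves the EI-TLN with input $(b_E, b_I + \Delta b_I)$, and then to read off the sign of $y_I - x_I$. Two ingredients do all the work: Proposition~\ref{prop:ctln_bound}, which (for $\tau_I \le \tau_I^*$ and $\vec x(0)\in A$) keeps $\vec x(t)$ in $A \subset T_I^+$ for all $t$, so that along $\vec x$ the inhibitory rectifier is inactive and the inhibitory equation is the affine $\tau_I\dot x_I = -x_I + \bar J_{IE}\sum_{f\,\exc}x_f + \bar J_{II}x_I + b_I$; and the positive homogeneity $\lfloor\kappa z\rfloor_+ = \kappa\lfloor z\rfloor_+$ of the threshold for $\kappa > 0$, which lets me move the scalar $\kappa$ across the rectifier in the excitatory equations.

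First I would substitute $y_e = \kappa x_e$ (excitatory $e$) and $y_I = \kappa x_I + c$ with $c := \frac{(1-\kappa)b_I + \Delta b_I}{1 - \bar J_{II}}$ into the would-be equations for $\vec y$. For the inhibitory coordinate, linearity collapses everything: $\tau_I\dot y_I = \kappa\tau_I\dot x_I$, and writing $g_x := \bar J_{IE}\sum_f x_f + \bar J_{II}x_I + b_I$ and $g_y$ for the corresponding quantity built from $\vec y$ and the new input, one line gives $g_y = \kappa g_x + c$ --- the choice of $c$ is exactly what cancels the constant terms, so $-y_I + g_y = \kappa(-x_I + g_x) = \tau_I\dot y_I$. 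Moreover $g_y = \kappa g_x + c \ge 0$ (since $g_x \ge 0$ on $A$, $\kappa > 0$, and $\Delta b_I \ge 0 \Rightarrow c \ge 0$), so the inhibitory rectifier is inactive along $\vec y$ as well and the substitution is self-consistent. For an excitatory coordinate, the new argument of the rectifier is $h_e^{\,y} = \kappa h_e + (1-\kappa)b_E + \bar J_{EI}c$ with $h_e := \sum_f\bar J_{ef}x_f + \bar J_{EI}x_I + b_E$, so $h_e^{\,y} = \kappa h_e$ iff $\bar J_{EI}c = (\kappa-1)b_E$. Combining this with $c(1-\bar J_{II}) = (1-\kappa)b_I + \Delta b_I$ is a single linear equation for $\kappa$; solving it --- using that $b := \frac{\bar J_{EI}}{1-\bar J_{II}}b_I + b_E$ is positive by the CTLN constraint, and that $\bar J_{EI} < 0$, $\Delta b_I \ge 0$ force $0 < \kappa \le 1$ provided $\Delta b_I$ is small enough that $\frac{\bar J_{EI}}{1-\bar J_{II}}(b_I + \Delta b_I) + b_E$ stays positive (the stimulated network retaining positive net drive) --- returns exactly the $\kappa$ in the statement. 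Then, by homogeneity, $-y_e + \lfloor h_e^{\,y}\rfloor_+ = \kappa(-x_e + \lfloor h_e\rfloor_+) = \tau_E\dot y_e$, and $y_e, y_I \ge 0$, so $\vec y$ is the solution of the EI-TLN with input $(b_E, b_I + \Delta b_I)$ emanating from $\vec y(0)$ (uniqueness being automatic since the vector field is globally Lipschitz).

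For the paradoxical inequality I would write $x_I(t) - y_I(t) = (1-\kappa)x_I(t) - c$ and bound $x_I(t)$ from below using $\vec x(t) \in A$, specifically the lower bound on $x_I$ --- equivalently on $\sum_f x_f$ --- from Proposition~\ref{prop:ctln_bound} / Eq.~\ref{eqn:act_bount_ei}. Substituting the explicit $\kappa$ and $c$ together with the within-cluster balance identity $\bar J_{EE}^\circlearrowleft = -\bar J_{EI}\bar J_{IE}/(1-\bar J_{II})$ reduces the needed inequality to $\sum_{f\,\exc}x_f(t) > b/\bar J_{EE}^\circlearrowleft$; but on $A$ one has $\sum_f x_f(t) > x_E^{\min} = b/(\bar J_{EE}^\circlearrowleft - \bar J_{EE}^{\not\to}) \ge b/\bar J_{EE}^\circlearrowleft$ because $\bar J_{EE}^{\not\to} \ge 0$, the first inequality being strict. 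Hence $y_I(t) < x_I(t)$ for all $t$; since $\vec y$ is obtained from $\vec x$ by an affine map, the same relation holds between the corresponding solutions of the $v$-model (Eq.~\ref{eqn:v_model_rate}) via $f(v) = x + \tau\dot x$, and thus in the clustered spiking network, as claimed in the main text.

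I expect the main obstacle to be bookkeeping rather than any conceptual difficulty: keeping the constant offsets consistent so that the two constraints on $c$ genuinely pin down the stated $\kappa$, and making the rectifier accounting airtight --- in particular the self-consistency identity $g_y = \kappa g_x + c \ge 0$, which is what guarantees $\vec y$ never leaves the region where the inhibitory threshold is inactive, and the borderline case $\bar J_{EE}^{\not\to} = 0$ in the last step, where the monotonicity is marginal and one must invoke the total-excitatory-activity lower bound in its strict form (and, for $\tau_I$ merely $\le \tau_I^*$ rather than $\to 0$, control the $O(\tau_I)$ gap between $x_I$ and its slaved value $(\bar J_{IE}\sum_f x_f + b_I)/(1-\bar J_{II})$).
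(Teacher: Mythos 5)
Your argument is correct and rests on the same two pillars as the paper's proof: positive homogeneity of the rectifier together with forward-invariance of $A$ (which keeps the inhibitory unit in its linear regime and supplies the lower bound on total excitatory activity), with the balance identity $\bar J_{EE}^\circlearrowleft = -\bar J_{EI}\bar J_{IE}/(1-\bar J_{II})$ closing the final inequality. The execution differs in a useful way. The paper first eliminates $x_I$ by writing it as an exponential filter $\bar x(t)$ of the total excitatory rate, derives a closed integro-differential equation for the excitatory coordinates alone, shows that $\kappa \vec x_E$ solves the stimulated version of that equation, and only then reconstructs $y_I$ and compares. You instead verify the full $(n+1)$-dimensional ansatz $(\kappa\vec x_E,\ \kappa x_I + c)$ directly in the ODE system, reading off $c$ from consistency of the inhibitory equation and $\kappa$ from consistency of the excitatory ones; this reproduces exactly the stated $\kappa$ and $c$, avoids the filter representation, and makes explicit two points the paper leaves implicit --- that $g_y = \kappa g_x + c \ge 0$ so the stimulated trajectory also stays above the inhibitory threshold, and that the construction needs $\kappa > 0$, i.e.\ $\Delta b_I$ small enough that $\frac{\bar J_{EI}}{1-\bar J_{II}}(b_I+\Delta b_I)+b_E$ remains positive. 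Two cautions on the last step. First, you slide between a lower bound on $x_I(t)$ and one on $\sum_{\alpha\,\exc} x_\alpha(t)$ as if interchangeable; they are not pointwise in time, since $x_I$ tracks a filtered history of the excitatory rates. Either bound $x_I$ directly from the box $B$ --- in which case you must carry the $b_I/(1-\bar J_{II})$ offset of the inhibitory nullcline, without which the bound as literally written in Proposition~\ref{prop:ctln_bound} is too weak once $b_I>0$ --- or argue through the filtered average $\bar x$ as the paper does. Second, at $\Delta b_I = 0$ one has $\kappa = 1$, $c = 0$, hence $y_I \equiv x_I$, so the strict inequality genuinely requires $\Delta b_I > 0$; your derivation, like the paper's, makes this visible even though the statement is phrased with $\Delta b_I \ge 0$.
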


\begin{proof}
    Because $\vec x(0) \in A \subseteq B \cap T_I^+$, $\vec x(T) \in T_I^+$ for all $t\geq 0.$ So along $\vec x(t)$, the inhibitory population always stays above threshold,  meaning that its dynamics are linear. This means we can describe the total inhibitory activity in terms of the external input and the time-average of the total excitatory input in the recent past: 
        \begin{align*}
        x_I(t) = \frac{\bar J_{IE}}{1 - \bar J_{II}} \bar x(t) + \frac{b_I }{1-\bar J_{II}},
    \end{align*}
    where 
    \begin{align*}
    \bar x(t)= \frac{1}{\tau_I}\int_0^t e^{(t'-t)/\tau_I} \sum_{\alpha \exc} x_{\alpha(t')} dt'.
    \end{align*} 
    % This allows us to write the inhibitory activity as n in terms of the excitatory rates as 
    % \begin{align*}
    %     x_I(t) = \frac{\bar J_{IE}}{1 - \bar J_{II}} \bar x + \frac{b_I }{1-\bar J_{II}}
    % \end{align*}
    This makes it possible to write an integro-differential equation for the excitatory rates which does not make reference to the inhibitory rates: 
    \begin{align}
        \dot x_{\alpha} = -x_{\alpha} + \left \lfloor  \sum_{\beta \exc} \bar J_{\alpha \beta} x_{\beta} + \frac{\bar J_{EI}\bar J_{IE}}{1 - \bar J_{II}} \bar x + \frac{\bar J_{EI}b_I }{1-\bar J_{II}} + b_E \right \rfloor _+ 
        \label{eqn:int_diff}
    \end{align}

    Now, for all excitatory neurons $\alpha$, let $y_{\alpha}(t) = \kappa x_{\alpha}(t)$.  Notice that 
\begin{align*}
  -y_{\alpha}(t) + \left \lfloor  \sum_{\beta \exc} \bar J_{\alpha \beta} y_{\beta} + \frac{\bar J_{EI}\bar J_{IE}}{1 - \bar J_{II}} \bar y+ \frac{\bar J_{EI}(b_I + \Delta b_I) }{1-\bar J_{II}} + b_E \right \rfloor _+  = \kappa \dot x_{\alpha} = \dot{y}_{\alpha}
    \end{align*}
     Thus if $x_{\alpha}(t)$ is a solution to Equation \ref{eqn:int_diff}, then $y_{\alpha(t)}$ is a solution to the same threshold linear network with input $b_E, b_I + \Delta b_I$. 
    Along this solution to the threshold linear network, the inhibitory population's firing rate is 

        \begin{align*}
        y_I(t) = \kappa \frac{\bar J_{IE}}{1 - \bar J_{II}} \bar x(t) + \frac{b_I + \Delta b_I}{1-\bar J_{II}} 
    \end{align*}
    Now, we compare $y_I(t)$ to $x_I(t)$.  We have 
    \begin{align*}
        y_I(t) - x_I(t) =& \frac{ 1}{ 1- \bar J_{II}} \left ( (\kappa -1) \bar J_{IE} \bar x(t)  + \Delta b_I\right) \\
        =& \frac{ \Delta b_I }{ 1- \bar J_{II}} \left ( \frac{\frac{\bar J_{EI}}{1 - \bar J_{II}}}{\frac{J_{EI}}{1 - J_{11}} b_I + b_E} \bar J_{IE} \bar x(t)  + 1\right).
    \end{align*}
    % Expanding $(\kappa -1)$ and factoring out $\Delta b_I$, this becomes 
    % \begin{align*}
    %             y_I(t) - x_I(t) = \frac{ \Delta b_I }{ 1- \bar J_{II}} \left ( \frac{\frac{\bar J_{EI}}{1 - \bar J_{II}}}{\frac{J_{EI}}{1 - J_{11}} b_I + b_E} \bar J_{IE} \bar x  + 1\right).
    % \end{align*}
Now, because $\vec x(0) \in A \subseteq B \cap T_I^+$, $\vec x(t)$ satisfies $\sum_{\alpha \exc} x_{\alpha }(t) >\frac{\frac{J_{EI}}{1 - J_{11}} b_I + b_E}{\bar J_{EE}^\circlearrowleft - \bar J_{EE}^{\not \to}}$. Thus, 

\begin{align*}
    \frac{\bar x(t) }{\frac{J_{EI}}{1 - J_{11}} b_I + b_E} > \frac{1}{\bar J_{EE}^\circlearrowleft - \bar J_{EE}^{\not \to}}.
\end{align*}

Applying the identity  $\bar J_{EE}^\circlearrowleft + \frac{\bar J_{EI} \bar J_{IE}}{1 - \bar J_{II}} = 0, $ we have 
\begin{align*}
    \frac{\frac{\bar J_{EI}}{1 - \bar J_{II}}}{\frac{J_{EI}}{1 - J_{11}} b_I + b_E} \bar J_{IE} \bar x(t) =   \frac{-\bar J_{EE}^\circlearrowleft \bar x(t)}{\frac{J_{EI}}{1 - J_{11}} b_I + b_E}  < \frac{-\bar J_{EE}^\circlearrowleft }{\bar J_{EE}^\circlearrowleft - \bar J_{EE}^{\not\to}} < -1.
\end{align*}
Thus, 
    \begin{align*}
                y_I(t) - x_I(t) < - \frac{ \Delta b_I }{ 1- \bar J_{II}}. 
    \end{align*}
    Increasing the input to the inhibitory population decreases its activity along all trajectories that start in the forward invariant region $A \subseteq B \cap T_I^+$ where the inhibitory neuron stays in a linear regime and the total excitatory activity obeys the bounds for CTLNs. 

\end{proof}

Figure \ref{fig:para} shows the response of the clustered spiking model to stimulation of the inhibitory cluster. In panel A, the network  has a single excitatory population and single inhibitory population. In panel $B$, the network has three excitatatory clusters, whose connectivity is a directed cycle, i.e., the graph has adjacency matrix 
\begin{align*}
    A = \begin{pmatrix}
        0 &0 &1\\
        1 & 0 & 0\\
        0 & 1 & 0
    \end{pmatrix}.
\end{align*}
The simulation parameters are $N_E = N_I = 1000$, $\tau_E = 40  \unit{ms}, \tau_I = 20 \unit{ms}$, $p_{EE}= .2, p_{EI}= .8, p_{IE}= .8,  p_{II}= .8, p_{EE}^\circlearrowleft =  .8$ , $J_{EE}^\circlearrowleft =2$, $J_{EE}^\to =1.25$, $J_{EE}^{\not\to} = 0$, $J_{EI} = -2$, $J_{II} = -3.0$, $J_{IE} = 4.0, b_E = 0.1$, $b_I = 0$. At time $t = 500 \unit{ms}$, an inhibitory stimulus is applied, changing the the value of $b_I$ to $0.15$.

\subsection{How fast does inhibition need to be?}

\subsubsection{Fixed point}
\label{sec:fixed_point_bif}
Now, we explore how fast inhibition must be relative to excitation for stable fixed points supported on target-free cliques to remain stable in the EI-TLN . At the fixed point, all excitatory firing rates have the same value. Further, because the graph is a clique, all pairs of excitatory clusters are connected with an edge of weight $\bar J_{EE}^{\to}$. 
We can therefore reduce to a system with one excitatory unit and one inhibitory unit with effective weights 
\begin{align*}
    \bar J_{EE}^{eff}& = (n-1)\bar J_{EE}^\to + \bar J_{EE}^\circlearrowleft\\
    \bar J_{IE}^{eff}& = n \bar J_{IE}.\\
\end{align*}
The stability of the fixed point depends on the eigenvalues of the matrix 
\begin{align*}
    A = \begin{pmatrix}
        \frac{\bar J^{eff}_{EE}-1}{\tau_E} & \frac{\bar J_{EI}}{\tau_E}\\
        \frac{\bar J^{eff}_{IE}}{\tau_I} & \frac{\bar J_{II}-1}{\tau_I}\\
    \end{pmatrix}
\end{align*}
which are 
\begin{align*}
    % \lambda^\pm = \frac{\tau_E  (\bar J_{II}-1) +\tau_I ( \bar J_{EE}-1) \pm \sqrt{(\bar J_{II}-1)^2 \tau_E^2 + (2\bar J_{EI}\bar J_{IE} -(\bar J_{EE} -1)(\bar J_{II}-1))\tau_E\tau_I + (\bar J_{EE}-1)^2 \tau_I^2}}{2\tau_E\tau_I} 
    \lambda^{\pm} = \frac{-B \pm\sqrt{B^2 - 4AC}}{2A}
\end{align*}
where 
$A =\tau_E\tau_I$, $B = -((\bar J^{eff}_{EE}-1)\tau_I + (\bar J_{II}-1)\tau_E)$, $C = (\bar J^{eff}_{EE}-1)(\bar J_{II}-1) - \bar J_{EI}\bar J^{eff}_{IE}.$  

Note that $C >0 $ for E-I CTLN parameters and any value of $\tau_I, \tau_E$:  
\begin{align*}
    C &= ((n-1) \bar J_{EE}^\to + \bar J_{EE}^\circlearrowleft -1)(\bar J_{II}-1)  - n\bar J_{EI}\bar J_{II}\\
 & = (\bar J_{II}-1)  \left((n-1) \bar J_{EE}^\to + \bar J_{EE}^\circlearrowleft -1 - n\frac{\bar J_{EI}\bar J_{II}}{\bar J_{II}-1}\right)\\
 & =  (\bar J_{II}-1)  \left((n-1) \bar J_{EE}^\to - (n-1)\bar J_{EE}^\circlearrowleft- 1 \right) \\
 &> 0.
     \end{align*}

This implies that there is a Hopf-like bifurcation when $B = 0,$ and that the fixed point is  stable when 
\begin{align*}
    -\frac{\bar J^{eff}_{EE}-1}{\bar J_{II} -1} < \frac{\tau_E}{\tau_I}.
\end{align*}
Because the system is non-smooth, the bifurcation does not fit the normal form for a standard bifurcation. We numerically continued the limit cycle which arises from this fixed point using the Julia package BifurcationKit \cite{veltz_bifurcationkitjl_2020}. At the parameters used in Fig.~\ref{fig:fp_bif} ($k = 1, \, \bar J_{EE}^\circlearrowleft = 1.5, \, \bar J_{EE}^\to = .75, \bar J_{EE}^{\not\to} = 0,\, \bar J_{EI} = -1.5, \, \bar J_{IE} = 2, \, \bar J_{II} = -1$) the bifurcation is a non-smooth analogue of a subcritical Hopf bifurcation since there is a small region of bistability ($3.8 \leq \tau_I/\tau_E \leq 4 $ of bistability between the limit cycle and the fixed point. 

\subsubsection{Limit cycle }
\label{sec:limit_cycle_bif}
Now, we consider how fast inhibition must be relative to excitation for a limit cycle in a CTLN to persist in the EI-TLN. We investigate this numerically using the Julia package BifucationKit for continuation of periodic orbits  \cite{veltz_bifurcationkitjl_2020}. For this analysis, we used the parameters  $J_{EE}^\circlearrowleft = 1.5, J_{EE}^\to =  0.75, J_{EE}^{\not\to} = 0.0, J_{EI} = -2.25, J_{IE} = 2.0, J_{II} = -2.0, \tau_E = 1.0$ We considered the directed graph with adjacency matrix 
\begin{align*}
A = \begin{pmatrix}
        0 & 0 & 1\\
    1 & 0 & 0\\
    0 & 1 & 0\\
\end{pmatrix}.
\end{align*}

To identify the CTLN-like cycle (Fig.~\ref{fig:3cycle_bif}, panels B and D (left), gold line in panel C), we first simulated the EI-TLN with $\tau_I = 0.01$ with the initial condition $x_0^{\mathrm{CTLN}} = [0.5, 0, 0.5, 0.67].$    
We used this trajectory as an initial guess for periodic orbit localization in BifurcationKit. Our initial guess for the period of the orbit was $T = 11$. To localize and continue this limit cycle, and all others in this section,  we discretized time into $m = 100$ time steps and used a finite differences method based on the trapezoid rule, as implemented in BifucationKit with the function \texttt{generate\_ci\_problem} in BifurcationKit. 
We continued this periodic orbit from $\tau_I = 0.001$ to $\tau_I = 8.0$ to produce the bifurcation diagram and other plots in Fig.~\ref{fig:3cycle_bif}.   Likewise, to localize and continue the EI cycle,  we made an initial guess by simulating the EI-TLN with $ \tau_I = 2.1$ and initial condition $x_0^{\mathrm{CTLN}} = [0.5, 0.5, 0.5, 0.7]$, and again continued a periodic orbit based on this initial guess, with initial guess of  $T = 5$ for the period. 

Finally, to localize and continue the mixed cycle, we first identified an analogous cycle in a smoothed version of the EI-TLN, with the non-smooth threshold linear function $[x]_+ = \max\{x, 0\}$ replaced by the smooth function $\texttt{squareplus}(x) = \frac{x + \sqrt{x^2 + a}}{2}$ for $a = 0.0001$. To identify the unstable limit cycle in this system, we made an initial guess by simulating the dynamics for the initial conditions $x_0^{\mathrm{mixed}} = 0.225 x_0^{\mathrm{CTLN}} + (1 - 0.225) x_0^{\mathrm{EI}}$
and $\tau_I = 2.5$.  In the smoothed system, continuing this limit cycle results in a loop in the bifurcation diagram. To identify an analogous limit cycle in the non-smooth case, we choose periodic orbits from the top and bottom branches of this loop. We then localize and continue each of these periodic orbits in the non-smooth system. In the original non-smooth system, the top and bottom branches of this periodic orbit meet at a cusp in the bifurcation diagram, so it is not possible to continue around the loop in that system. 

\end{document}